
\documentclass[10pt, a4paper]{article}
\usepackage{graphicx,latexsym, amsmath,amsfonts}
\usepackage{amsthm}
\usepackage{enumerate}
\usepackage{epsfig}

\begin{document}


\newcommand{\E}{\mathbb{E}}
\newcommand{\PP}{\mathbb{P}}
\newcommand{\RR}{\mathbb{R}}
\newcommand{\NN}{\mathbb{N}}
\newcommand{\fracs}[2]{{ \textstyle \frac{#1}{#2} }}

\newtheorem{theorem}{Theorem}[section]
\newtheorem{lemma}[theorem]{Lemma}
\newtheorem{coro}[theorem]{Corollary}
\newtheorem{defn}[theorem]{Definition}
\newtheorem{assp}[theorem]{Assumption}
\newtheorem{expl}[theorem]{Example}
\newtheorem{prop}[theorem]{Proposition}
\newtheorem{rmk}[theorem]{Remark}
\newtheorem{notation}[theorem]{Notation}

\def\a{\alpha} \def\g{\gamma}
\def\e{\varepsilon} \def\z{\zeta} \def\y{\eta} \def\o{\theta}
\def\vo{\vartheta} \def\k{\kappa} \def\l{\lambda} \def\m{\mu} \def\n{\nu}
\def\x{\xi}  \def\r{\rho} \def\s{\sigma}
\def\p{\phi} \def\f{\varphi}   \def\w{\omega}
\def\q{\surd} \def\i{\bot} \def\h{\forall} \def\j{\emptyset}

\def\be{\beta} \def\de{\delta} \def\up{\upsilon} \def\eq{\equiv}
\def\ve{\vee} \def\we{\wedge}

\def\t{\tau}

\def\F{{\cal F}}
\def\T{\tau} \def\G{\Gamma}  \def\D{\Delta} \def\O{\Theta} \def\L{\Lambda}
\def\X{\Xi} \def\S{\Sigma} \def\W{\Omega}
\def\M{\partial} \def\N{\nabla} \def\Ex{\exists} \def\K{\times}
\def\V{\bigvee} \def\U{\bigwedge}

\def\1{\oslash} \def\2{\oplus} \def\3{\otimes} \def\4{\ominus}
\def\5{\circ} \def\6{\odot} \def\7{\backslash} \def\8{\infty}
\def\9{\bigcap} \def\0{\bigcup} \def\+{\pm} \def\-{\mp}
\def\<{\langle} \def\>{\rangle}

\def\lev{\left\vert} \def\rev{\right\vert}
\def\1{\mathbf{1}}

\newcommand\wD{\widehat{\D}}

\title
{ \bf First order strong approximations  of  scalar  SDEs with values in a domain}

\author{Andreas Neuenkirch\footnote{Institut f\"ur Mathematik, Universit\"at Mannheim,
A5, 6, D-68131 Mannheim, Germany, \texttt{neuenkirch@kiwi.math.uni-mannheim.de}}
 \and Lukasz Szpruch\footnote{Mathematical Institute, University of Oxford, 24-29 St Giles',
Oxford OX1 3LB, England \texttt{Szpruch@maths.ox.ac.uk}} }

\date{}

\maketitle

\begin{abstract}
{\textsf{\em We are interested in strong approximations of one-dimensional SDEs which have  non-Lipschitz coefficients and which take values in a domain. Under a set of general assumptions we derive an implicit scheme  that preserves the domain of the SDEs and is strongly convergent with rate one. Moreover, we show that this general result can be applied to many SDEs we encounter in mathematical finance and  bio-mathematics. We will demonstrate flexibility of our approach by analysing classical examples of SDEs with  sublinear coefficients (CIR, CEV models and Wright-Fisher diffusion) and also with superlinear coefficients (3/2-volatility, Ait-Sahalia model).       
  Our goal is to justify an efficient 
Multi-Level Monte Carlo (MLMC) method for a rich family of SDEs, which relies on good strong convergence properties. 
}

\medskip
\noindent \textsf{{\bf Key words: } \em 
Stochastic differential equations, non-Lipschitz coefficients, Lamperti Transformation,
backward Euler-Maruyama scheme.}

\medskip
\noindent{\small\bf 2000 Mathematics Subject Classification: } 60H10,\;65J15
}
\end{abstract}

\section{Introduction}
The goal of this paper is to derive  an efficient numerical approximation for  one-dimensional SDEs which take values in
a domain and have  non-Lipschitz drift or  diffusion coefficients. 
 Typical examples of such SDEs  are the Cox-Ingersoll-Ross process (CIR), the CEV model, 
the Wright-Fisher diffusion, where the main difficulty is the sublinearity of the diffusion coefficient. 
Furthermore, the approach developed in this paper can be also applied to SDEs with  superlinear coefficients.
Prominent examples are here
the Heston 3/2-volatility process and the Ait-Shalia model. All the mentioned processes play an important role in 
mathematical finance and bio-mathematical applications. 
Our key idea is 
to transform the original SDE using the Lamperti transformation into a SDE with constant diffusion coefficient, see e.g.  \cite{iacus2008simulation}. 
The transformed SDE is then approximated by a   backward (also called drift-implicit) Euler-Maruyama  scheme (BEM)
and transforming back yields an approximation scheme for the original SDE.
 This strategy was found successful in a recent work \cite{dereich2012euler}
for the CIR process,
where the authors proved that the piecewise linearly interpolated BEM scheme strongly converges 
with  rate one  half (up to a log-term) with respect to a uniform $L^p$-error criterion. This strategy was also 
suggested by Alfonsi in \cite{alfonsi2005discretization}. Here, we extend that work in several ways: 
\begin{itemize}
\item Considering the maximum error in the discretization points, we prove that  the drift-implicit Euler-Maruyama scheme for the CIR process strongly converges  with rate one under slightly more restrictive
conditions on the parameters of the process than in \cite{dereich2012euler}.
\item We provide a general framework for  the strong order one convergence of the BEM scheme for  SDEs with constant diffusion 
and one-sided Lipschitz drift coefficients. 
\item Using this framework we present a detailed convergence analysis for several SDEs with  sub- and super-linear coefficients.    
\item We also show that BEM for the transformed SDE is closely related to a drift-implicit Milstein scheme
for the original SDE, which has been introduced in \cite{higham2012convergence}. In the case of the CIR process we provide a sharp error analysis for this scheme.
\end{itemize}

 Independently of  and simultaneously to the research presented in this paper, the same approach was also used by Alfonsi in \cite{alfonsi2012}
to derive strong order one convergence of the BEM scheme for the CIR and the CEV process. See Remark \ref{comp_alf_1} for a discussion.

\medskip

To illustrate the main difficulties and also our main idea let us consider the 
CIR process 
\begin{equation} \label{eq:orCIR}
dy(t) = \k(\o-y(t))dt + \s\sqrt{y(t)}d w(t)
\end{equation}
with $2\k \o \ge\s^2$.
It is a simple implication of the Feller test that the solution of 
equation \eqref{eq:orCIR} is strictly positive when $2\k \o \ge\s^2$ and $y(0)>0$.
This SDE is often used in mathematical finance for interest rate or stochastic volatility models.
However scalar SDEs  with  square root diffusion coefficients
appear not only in the financial literature but belong to the most fundamental SDEs as they are an
approximation to Markov jump processes \cite{either1986markov}.

\smallskip

Once we attempt to simulate \eqref{eq:orCIR}  using classical discretization
methods, see e.g. \cite{kloeden1992numerical}, we face two difficulties:
\begin{itemize}
\item In general, these methods do not preserve positivity and therefore
are not well defined when directly applied to equation \eqref{eq:orCIR};
\item The diffusion term is not Lipschitz continuous and therefore
standard
assumptions required for weak and strong convergence, see e.g. \cite{kloeden1992numerical}, do not hold.
\end{itemize}

Consequently, a considerable amount of research was devoted to the numerical
approximation of this equation, see
\cite{higham2005convergence,alfonsi2005discretization,kahl2006fast,berkaoui2007euler,lord-comparison,andersen-efficient},
to mention a few. However no strong convergence of order one results have been obtained so far up to best of our knowledge.
For a comparison of the different proposed schemes based on  simulation studies, see  
\cite{alfonsi2005discretization,lord-comparison}.

Our approach is based on a suitable transformation of the CIR process. Applying It\^{o}'s formula to $x(t)=\sqrt{y(t)}$ gives 
\begin{align} \label{eq:Volatility}
dx(t)  = \frac{1}{2}\k \left( \Big(\o - \frac{\s^{2}}{4\k} \Big) x(t)^{-1} - x(t)  \right)dt
            + \frac{1}{2}\s dw(t).
\end{align}
Zhu \cite{zhu2009modular} pointed out that
the drawback of the transformed equation is that the new mean level
$(\o - \frac{\s^{2}}{4\k}) x(t)^{-1}$ is stochastic and
that a naive Euler discretization cannot capture the erratic behavior
of $x(t)^{-1}$-term, although almost sure convergence of this method holds true, see \cite{jentzen2009pathwise}. The weakness of a naive Euler discretization is that its transition density is Gaussian and therefore its moments explode due to the $x(t)^{-1}$-term.

 On the other hand, Alfonsi showed  in \cite{alfonsi2005discretization} that the BEM applied to 
 \eqref{eq:orCIR}  preserves positivity of the solution and also monotonicity with respect to the initial
value. Moreover, his simulation studies indicated good convergence properties of this scheme.
In this paper we follow the recent result by Dereich et al. \cite{dereich2012euler},
where it was shown that the piecewise linear interpolation  of BEM applied to  \eqref{eq:orCIR},
strongly converges with a rate one half (up to a log-term).

\medskip

Given any step size $\D t>0$, the BEM scheme has the  form
\begin{align}\label{eq:CIRBEM}
X_{k+1}  & =  X_{k} + \frac{1}{2}\k \big( \o_{v} X_{k+1}^{-1} - X_{k+1}  \big )\D t
              +  \frac{1}{2}\s \D w_{k+1}, \qquad  k=0,1, \ldots, \\ X_0&=x(0) \nonumber
\end{align} with 
 $$\D w_{k+1} = w((k+1)\D t) - w(k\D t),  \qquad k=0,1,\ldots $$
and $$ \o_v =  \o - \frac{\s^{2}}{4\k}.$$
 We will establish a strong convergence of order one for the maximum 
$L^{p}$-distance in the discretization points between  \eqref{eq:orCIR}  and \eqref{eq:CIRBEM}, see Section 3.
For example for the $L^2$-distance we obtain
$$ \mathbb{E} \max_{k=0, \ldots, \lceil T/ \Delta t \rceil} |X_{k} - x(k \D t)|^{2} \leq C_2
\cdot \Delta t^2 \qquad \textrm{for}   \quad \frac{ 2\kappa \theta}{\sigma^2} > 3.$$ 
 As a consequence we also obtain  
the  same convergence order for the  approximation of the original CIR process
by $X_k^2$, $k=0,1,\ldots,\lceil T/\D t \rceil$.

\medskip
In this paper we will show that the above idea naturally extends to many types of SDEs 
with non-Lipschitz coefficients. The combination of the Lamperti transformation and the backward Euler scheme 
enables us to analyse the $L^p$-convergence rates for many scalar SDEs encountered in practice.  In particular, transforming BEM back
we obtain an order one scheme for the original SDE, which is close to a Milstein-type scheme,  see Section 4. 
Hence our approach turns out to be a new method for deriving numerical methods with
strong order one convergence for SDEs with non-Lipschitz coefficients.
Although strong convergence of  backward schemes for SDEs with non-Lipschitz coefficients has already been analysed in the literature and their convergence
for models as the Ait-Sahalia and the Heston $3/2$-volatility was obtained, see \cite{higham2012convergence,mao2012strong,szpruch-strongly},  schemes with strong convergence order one have not been established yet  in this setting. 

\medskip
Another motivation for our work are results by
Giles \cite{giles2006improved,giles2008multilevel}, who showed that for optimal MLMC simulations one should use discretization schemes
with strong convergence order one. Note that  strong convergence of the discretization scheme used for the       
MLMC simulations seems to be not only a sufficient but also a necessary condition \cite{hutzenthaler2011divergence}.

\medskip
The remainder of this paper is structured as follows. In the next section, we provide a general convergence result for
the BEM method applied to scalar SDEs with additive noise. Section 3 contains the results for our examples, i.e. the CIR, CEV,
Ait-Sahalia, 3/2-Heston volatility and Wright-Fischer SDEs. 
In Section 4, we provide the relation of BEM and a drift-implicit Milstein scheme and give an error analysis for the case of the CIR process. The last section
contains a short discussion.

\bigskip
\smallskip

\section{The BEM scheme for SDEs with additive noise}

\medskip

\subsection{Preliminaries}

Let $D=(l,r)$, where  $-\infty<l<r<\infty$, and let $a,b: D \rightarrow D$  be continuously differentiable functions. Moreover, let  $(\Omega, \mathcal{F},  (\mathcal{F}_t)_{t \geq 0}, \mathbb{P})$  be a filtered probability space and
$w(t)$, $t \geq 0$, a standard $(\mathcal{F}_t)_{t \geq 0}$-Brownian motion.
We begin with the SDE
\begin{align} \label{sde_target}
dy(t) = a(y(t))dt + b(y(t))dw(t), \quad  t \geq 0,\qquad  y(0) \in D,  
\end{align}
and assume that it has a unique strong solution with
$$ \mathbb{P}(y(t) \in D, \,\, t \geq 0 ) =1.$$
If $b(x)>0$ for all $x \in D$, then we can use the   Lamperti-type transformation 
\begin{equation*} 
F(x) = \lambda \int^{x} \frac{1}{b(y)}dy
\end{equation*}
for some $\lambda >0$. It\^{o}'s Lemma with $x(t)=F(y(t))$ gives the transformed SDE
\[
dx(t) = f(x(t)) dt +  \lambda dw(t),  \quad  t \geq 0,\qquad  x(0) \in F(D),
\]
with
$$ f(x)= \lambda \left( \frac{a(F^{-1}(x))}{b(F^{-1}(x))} -\frac{1}{2}b'(F^{-1}(x)) \right), \qquad x \in F(D), $$
where $F(D) =  (F(l),F(r))$.
Note that the classical Lamperti transformation corresponds to $\lambda=1$.
This transformation allows to  shift non-linearities from the diffusion coefficient
into the drift coefficient. Then, under appropriate assumptions on $f$ (respectively $a$ and $b$),
one can apply the backward Euler scheme
\begin{align} \label{BEM-lamp}
X_{k+1} = X_{k} + f(X_{k+1}) \D t +  \lambda \D w_{k+1}
\end{align}
and derive a strong convergence rate of order one for the maximum $L^p$-error in the discretization points, see Theorem \ref{th:mainTH} in the in Section 2.3.

\medskip
\subsection{Backward Euler-Maruyama scheme}

In this section we focus on the numerical approximation of
\begin{equation} \label{eq:SDE}
dx(t) = f(x(t))dt + \sigma \, dw(t), \quad t \geq 0, \qquad x(0)=x_0
\end{equation}
by the backward
Euler-Maruyama scheme
\begin{equation} \label{eq:BEM}
X_{k+1} = X_{k} + f(X_{k+1}) \D t +  \sigma \D w_{k+1},\quad  k=0,1, \ldots, \qquad X_0=x_0.
\end{equation}

\smallskip

We will work under the following assumption on the SDE itself:
\begin{assp} \label{as:0} 
Let $-\infty<\a<\be<\infty$ and assume that SDE \eqref{eq:SDE} has a unique strong solution which takes values in the set $(\alpha, \beta) \subset \mathbb{R}$, i.e.
\begin{align}\mathbb{P}(x(t) \in (\alpha, \beta), \,\, t \geq 0)=1.  \label{domain}\end{align}
\end{assp}

\smallskip

For the well-definedness of the drift-implicit Euler-Maruyama scheme we need the following assumption on the drift-coefficient:
\begin{assp} \label{as:A} The function $f: (\alpha, \beta) \rightarrow (\alpha, \beta)$ is continuous.
Moreover, there exists a constant $K \in \mathbb{R}$ such that
\begin{equation} \label{one_sided_lip}
 (x-y)(f(x)-f(y)) \le K\lev x-y \rev^2
\end{equation}
for all $x,y \in (\alpha, \beta)$.
\end{assp}

The Feller test, see e.g. Theorem V.5.29 in \cite{karatzas1991brownian}, gives that condition \eqref{domain} is equivalent to 
$$ \lim_{x \rightarrow \alpha_{+}} v(x) =  \infty, \qquad \qquad \lim_{x \rightarrow \beta_{-}} v(x) =   \infty $$
with 
$$ v(x)=  \frac{2}{\sigma^2} \int_{x_0}^x \int_{x_0}^{\zeta_2}  \frac{p'( \zeta_2)}{p'(\zeta_1)} d \zeta_1 d \zeta_2 $$
and
the  scale function 
$$p(x)= \int_{x_0}^x \exp \left( -2 \int_{x_0}^{\xi} \frac{f(u)}{\sigma^2} \, d u \right) \, d \xi, \qquad x \in (\alpha, \beta).$$
Note that $v$ can be rewritten as
$$ v(x)=  \frac{2}{\sigma^2} \int_{x_0}^x \int_{x_0}^{\zeta_2}  \exp  \left( -2 \int_{\zeta_1}^{\zeta_2} \frac{f(u)}{\sigma^2} \, d u \right)  d \zeta_1 d \zeta_2. $$
The condition on $v$ now implies  (recall that $\alpha < x_0 < \beta$)
\begin{align}
\label{exp_1}
 \lim_{x \rightarrow \alpha_{+}} \int^{x_0}_{x}  \exp  \left(2 \int_x^{\zeta_1} \frac{f(u)}{\sigma^2} \, d u \right)  d \zeta_1 =  \infty
\end{align}
and 
\begin{align} \label{exp_2}
 \lim_{x \rightarrow \beta_{-}} \int_{x_0}^{x}  \exp  \left( -2 \int_{\zeta_1}^{x} \frac{f(u)}{\sigma^2} \, d u \right)  d \zeta_1 =  \infty.
\end{align}
Now consider \eqref{exp_1} and assume that $ \limsup_{x \rightarrow \alpha_{+}} f(x) \neq  \infty$
However, if this would be true, the expression in \eqref{exp_1} would be finite due to the continuity of $f$.
Using a similar argument for \eqref{exp_2} we obtain
\begin{align} \limsup_{x \rightarrow \alpha_{+}} f(x) =  \infty, \qquad \qquad \liminf_{x \rightarrow \beta_{-}} f(x) =  - \infty.  \label{coerciv_help}
\end{align}

\smallskip

The drift-implicit Euler scheme is well defined if
\begin{equation*}
X_{k+1} - f(X_{k+1}) \D t = X_{k}  +  \sigma \D w_{k+1}
\end{equation*}
has a unique solution for $k=0,1, \ldots$. This is guaranteed by the following result:

\begin{lemma} \label{lem:Positivity}
 Let Assumption \ref{as:0} and \ref{as:A} hold and let $K\D t <1$. Moreover set
\begin{equation*} 
G(x)=x -  f(x)\D t, \quad x\in (\alpha, \beta).
\end{equation*}
Then for any $c \in \mathbb{R}$ there exists a unique $x \in (\alpha, \beta)$  such that
$G(x)=c$.
\end{lemma}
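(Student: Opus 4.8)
The plan is to show that $G$ is a continuous, strictly increasing bijection from $(\alpha,\beta)$ onto all of $\mathbb{R}$; the assertion of the lemma (existence and uniqueness of $x$ with $G(x)=c$ for every $c\in\mathbb{R}$) then follows at once from strict monotonicity together with the intermediate value theorem.

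First I would establish strict monotonicity, which simultaneously delivers uniqueness. For $x\neq y$ in $(\alpha,\beta)$ the one-sided Lipschitz bound \eqref{one_sided_lip} gives
$$(x-y)(G(x)-G(y)) = |x-y|^2 - \D t\,(x-y)(f(x)-f(y)) \ge (1-K\D t)|x-y|^2 > 0,$$
where the strict positivity uses $K\D t<1$. Hence $G$ is strictly increasing, so it is injective and any solution of $G(x)=c$ is unique.

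For existence I would invoke the intermediate value theorem, which applies because $f$, and therefore $G$, is continuous on $(\alpha,\beta)$. Since $G$ is strictly increasing, the one-sided limits $G(\alpha_+):=\lim_{x\to\alpha_+}G(x)$ and $G(\beta_-):=\lim_{x\to\beta_-}G(x)$ exist in $[-\infty,\infty)$ and $(-\infty,\infty]$ respectively. It then suffices to prove $G(\alpha_+)=-\infty$ and $G(\beta_-)=+\infty$, for then continuity and monotonicity force $G$ to map $(\alpha,\beta)$ onto $\mathbb{R}$.

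This last point is where the only real subtlety lies, and where the Feller-test consequence \eqref{coerciv_help} enters. We know only that $\limsup_{x\to\alpha_+}f(x)=\infty$ and $\liminf_{x\to\beta_-}f(x)=-\infty$, which a priori is weaker than genuine limits; the main obstacle is to upgrade these to statements about $G$ at the boundary. The key observation is that the monotonicity of $G$ does exactly this: writing $f(x)=(x-G(x))/\D t$ and noting that $x\to\alpha$ is finite, if $G(\alpha_+)$ were finite then $f(x)$ would converge to the finite value $(\alpha-G(\alpha_+))/\D t$ as $x\to\alpha_+$, contradicting $\limsup_{x\to\alpha_+}f(x)=\infty$; hence $G(\alpha_+)=-\infty$. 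The symmetric argument at $\beta$, using $\liminf_{x\to\beta_-}f(x)=-\infty$, yields $G(\beta_-)=+\infty$. Combining strict monotonicity, continuity, and these boundary limits completes the proof.
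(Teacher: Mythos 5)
Your proof is correct and follows essentially the same route as the paper's: the identical one-sided-Lipschitz computation gives strict monotonicity, and the boundary limits $G(\alpha_+)=-\infty$, $G(\beta_-)=+\infty$ are deduced from \eqref{coerciv_help} together with monotonicity (the paper phrases this as coercivity and cites a standard monotone-operator result where you invoke the intermediate value theorem, but in one dimension these are the same argument). Your explicit contradiction step upgrading $\limsup_{x\to\alpha_+}f(x)=\infty$ to $\lim_{x\to\alpha_+}G(x)=-\infty$ is a nice spelling-out of what the paper leaves implicit.
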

\begin{proof} The result follows if we can show that the
function $G$ is continuous, coercive and strictly monotone on $(\a,\be)$ (see
\cite{Zeidler1985}).  
However, due to Assumption  \ref{as:A} the function $G$ is continuous on $(\alpha, \beta)$.
Moreover, since
\begin{align*}
 (x-y)(G(x)-G(y)) &= (x-y)^2 - (x-y)(f(x)-f(y)) \D t  \\ & \ge (1 - K^+ \D t)(x-y)^{2} >0
\end{align*} by \eqref{one_sided_lip} (with  $K^+=\max \{0, K\}$)
the required strict monotonicity is obtained.
Finally, \eqref{coerciv_help} and the monotonicity imply that
$$\liminf_{x\rightarrow \alpha_{+} } G(x)=  \lim_{x\rightarrow \alpha_{+} } G(x)= -\infty$$
and $$\limsup_{x\rightarrow \beta_{-} } G(x)= \lim_{x\rightarrow \beta_{-} } G(x)=\infty,$$
so the function $G$ is coercive on $(\alpha, \beta)$.
\end{proof}
Note that for $K \leq 0$ there is no restriction on $\D t$. 

\smallskip
For completeness, we state here a well known discrete version of Gronwall's Lemma:

\begin{lemma} \label{gronwall_disc} Let $\D t >0$ and let $g_n, \lambda_{n} \in \mathbb{R}$, $n \in \mathbb{N}$, and $\eta \geq 0$ be given.
Moreover, assume that $1-\eta \D t >0$ and $1+\lambda_{n}>0$, $n\in\NN$. Then, if $a_n \in \mathbb{R}$, $n \in \mathbb{N}$, satisfies $a_0=0$ and
$$  a_{n+1} \leq a_n(1+ \lambda_{n}) + \eta a_{n+1} \D t + g_{n+1},  \qquad n=0,1, \ldots,$$ then this sequence also satisfies
$$  a_n \leq  \frac{1}{(1-\eta \D t)^n} \sum_{j=0}^{n-1} (1-\eta \D t)^j g_{j+1}\prod_{l=j+1}^{n-1}(1+\lambda_{l}), \qquad n=0,1, \ldots.  $$
\end{lemma}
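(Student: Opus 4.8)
The plan is to prove the explicit bound by induction on $n$, after first putting the recursion into a form that is solved for $a_{n+1}$. Since $1 - \eta \D t > 0$ by assumption, the hypothesis $a_{n+1} \leq a_n(1 + \lambda_n) + \eta a_{n+1}\D t + g_{n+1}$ can be rearranged, by moving the $\eta a_{n+1}\D t$ term to the left-hand side and then dividing, into
$$ a_{n+1} \leq \frac{1 + \lambda_n}{1 - \eta \D t}\, a_n + \frac{g_{n+1}}{1 - \eta \D t}, \qquad n = 0, 1, \ldots $$
This is now a genuine first-order linear recursive inequality in $a_n$, and the asserted closed form is simply its solution.

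For the base case $n = 0$ the claimed right-hand side is an empty sum and hence equals $0$, which matches $a_0 = 0$ with equality. For the inductive step I would assume the bound holds at level $n$ and substitute it into the rearranged inequality above. The one point requiring care is that, in order to substitute an upper bound for $a_n$ into the term $\frac{1 + \lambda_n}{1 - \eta \D t} a_n$ without reversing the direction of the inequality, I must know that the multiplying coefficient is nonnegative; this is exactly where both standing hypotheses enter, since $1 + \lambda_n > 0$ and $1 - \eta \D t > 0$ together give $\frac{1 + \lambda_n}{1 - \eta\D t} > 0$.

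After this substitution the proof reduces to index bookkeeping. Multiplying the inductive bound by $(1 + \lambda_n)$ absorbs that factor into the existing product via $(1 + \lambda_n)\prod_{l=j+1}^{n-1}(1 + \lambda_l) = \prod_{l=j+1}^{n}(1 + \lambda_l)$, while dividing through by $(1 - \eta \D t)$ raises the power in the denominator from $n$ to $n+1$. The remaining term $\frac{g_{n+1}}{1 - \eta \D t}$ is then recognised as precisely the $j = n$ summand of the target sum, because $(1 - \eta \D t)^n g_{n+1} \prod_{l=n+1}^{n}(1 + \lambda_l) = (1 - \eta \D t)^n g_{n+1}$, the last product being empty. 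Collecting the two pieces yields the claimed bound at level $n + 1$, which closes the induction. I do not expect a genuine obstacle here; the only things to watch are the sign of the coefficients, handled by the two positivity assumptions, and the off-by-one indexing of the telescoping product and the geometric power.
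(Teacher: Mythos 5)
Your proof is correct. The paper states this lemma without proof, calling it a well-known discrete version of Gronwall's lemma, so there is no authorial argument to compare against; your rearrangement of the implicit inequality into $a_{n+1} \leq \frac{1+\lambda_n}{1-\eta\Delta t}\,a_n + \frac{g_{n+1}}{1-\eta\Delta t}$ followed by induction is exactly the standard argument the authors take for granted, and you correctly flag the only two delicate points (positivity of the coefficient, guaranteed jointly by $1+\lambda_n>0$ and $1-\eta\Delta t>0$, and the empty-product bookkeeping that makes the $j=n$ summand equal $\frac{g_{n+1}}{1-\eta\Delta t}$).
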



\smallskip

Under the above assumptions we have the following moment bounds for the SDE and the BEM scheme:

\begin{lemma} \label{lem_sup_moments}
Let $T >0$ and let Assumption  \ref{as:0} and  \ref{as:A} hold. Then we have
\begin{align*} 
  \mathbb{E} \sup_{t \in [0,T]}
 |x(t)|^q  < \infty \end{align*}
for all $q \geq 1$.
If additionally $2K\D t < \eta$ for some $\eta<1$, then for all $q \geq 1$ there exist constants $C_{q}>0$, which are independent of $\D t$, such that 
\begin{align*}
\mathbb{E} \sup_{k=0, \ldots, \lceil T / \D t \rceil}
 |X_k|^q  \leq C_q.
\end{align*}

\end{lemma}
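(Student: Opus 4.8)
The common engine for both estimates is a Lyapunov function built from the centered even powers $V_p(x) = (x-x_0)^{2p}$, which is exactly the device that turns the one-sided Lipschitz condition \eqref{one_sided_lip} into a usable drift bound. Writing $(x-x_0)f(x) = (x-x_0)(f(x)-f(x_0)) + (x-x_0)f(x_0)$ and applying \eqref{one_sided_lip} with $y=x_0$ gives $(x-x_0)f(x) \le K(x-x_0)^2 + f(x_0)(x-x_0)$; multiplying by $(x-x_0)^{2p-2}$ and using Young's inequality on the remaining lower-order terms yields the clean bound $2p(x-x_0)^{2p-1}f(x) + p(2p-1)\sigma^2(x-x_0)^{2p-2} \le C_p(1+(x-x_0)^{2p})$. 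It suffices to treat $q=2p$, $p\in\NN$, since general $q\ge 1$ follows by Jensen's inequality, and to bound centered moments, because $|x|^q \le 2^{q-1}(|x_0|^q + |x-x_0|^q)$.

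For the SDE I would apply It\^o's formula to $V_p(x(t))$. The drift bound above gives $\E V_p(x(t)) \le V_p(x_0) + C_p\int_0^t(1+\E V_p(x(s)))\,ds$, so $\sup_{t\in[0,T]}\E V_p(x(t)) < \infty$ by Gronwall's lemma. To pass from $\sup_t \E$ to $\E\sup_t$, I would take the supremum over $t$ before expectations, control the stochastic integral $\int_0^t 2p\sigma(x(s)-x_0)^{2p-1}dw(s)$ by the Burkholder--Davis--Gundy inequality, and absorb the resulting $\tfrac12\E\sup_t V_p(x(t))$ into the left-hand side via Young's inequality. (Read literally with $-\infty<\alpha<\beta<\infty$, the SDE bound is of course immediate from $|x(t)|\le\max\{|\alpha|,|\beta|\}$; the argument above is what is needed when the framework is applied with an unbounded domain.)

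For the scheme, the discrete analogue of It\^o's formula is the identity $2a(a-b) = a^2 - b^2 + (a-b)^2$. Multiplying $X_{k+1}-X_k = f(X_{k+1})\D t + \sigma\D w_{k+1}$ by $2(X_{k+1}-x_0)$ and applying this with $a=X_{k+1}-x_0$, $b=X_k-x_0$ produces
\[ (X_{k+1}-x_0)^2 - (X_k-x_0)^2 + (X_{k+1}-X_k)^2 = 2(X_{k+1}-x_0)f(X_{k+1})\D t + 2\sigma(X_{k+1}-x_0)\D w_{k+1}. \]
The drift term is handled by the centered one-sided Lipschitz bound exactly as for the SDE. The delicate point, and the main obstacle, is that the scheme is implicit, so $X_{k+1}$ depends on $\D w_{k+1}$ and the noise term has nonzero conditional mean. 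I would resolve this by the split $X_{k+1}-x_0 = (X_k-x_0) + (X_{k+1}-X_k)$, so that $2\sigma(X_{k+1}-x_0)\D w_{k+1} = 2\sigma(X_k-x_0)\D w_{k+1} + 2\sigma(X_{k+1}-X_k)\D w_{k+1}$; the first summand is a genuine martingale increment (its factor is $\F_{t_k}$-measurable), while the second, together with a leftover $\sigma^2(\D w_{k+1})^2$, is absorbed into the free non-negative term $(X_{k+1}-X_k)^2$ on the left via Young's inequality. After moving $2K\D t(X_{k+1}-x_0)^2$ to the left, the hypothesis $2K\D t<\eta<1$ keeps the coefficient $1-2K\D t$ bounded away from zero uniformly in $\D t$, which is precisely what prevents the constants $C_q$ from degenerating as $\D t\to 0$.

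Taking expectations then leaves the recursion $\E(X_{k+1}-x_0)^2 \le \E(X_k-x_0)^2 + (2K+1)\D t\,\E(X_{k+1}-x_0)^2 + C\D t$, which is exactly of the implicit form covered by the discrete Gronwall Lemma \ref{gronwall_disc} (with $\lambda_l\equiv 0$); the hypothesis $2K\D t<\eta<1$ ensures the positivity required there and gives $\sup_k\E(X_k-x_0)^2 \le C$. For the maximal bound I would instead sum the one-step inequalities, control the discrete martingale $\sum_k 2\sigma(X_k-x_0)\D w_{k+1}$ by Doob's (discrete Burkholder--Davis--Gundy) inequality, and again absorb via Young. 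The higher even moments $q=2p$ proceed along the same lines, with the identity for $a^2-b^2$ replaced by an estimate for $a^{2p}-b^{2p}$ and repeated use of Young's inequality; this bookkeeping, rather than any new idea, is where the real labour lies, and the implicitness-handling split above remains the single ingredient on which each step hinges.
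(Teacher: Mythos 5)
For the quadratic case $q=2$ your argument is essentially the paper's own: the paper likewise multiplies the implicit step by the centred iterate, uses \eqref{one_sided_lip} to handle the drift, converts the anticipating noise term into an adapted martingale increment plus an absorbable square, and closes with the discrete Gronwall Lemma \ref{gronwall_disc} and a martingale maximal inequality; your first part (the SDE bound) is the Higham--Mao--Stuart argument that the paper simply cites, and your remark about the literal finiteness of $(\alpha,\beta)$ is correct. One caveat on your discrete step: the paper centres not at $x_0$ but at a point $x^*$ with $f(x^*)=0$, whose existence follows from \eqref{coerciv_help} and continuity of $f$. With your centring at $x_0$ the leftover term $f(x_0)(X_{k+1}-x_0)\Delta t$ costs, after Young's inequality, an extra $+\Delta t$ in the coefficient of $(X_{k+1}-x_0)^2$, so the Gronwall step needs $1-(2K+1)\Delta t>0$; this does \emph{not} follow from $2K\Delta t<\eta<1$ alone (take $K\le 0$ and $\Delta t\ge 1$), contrary to what you assert. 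This is fixable (restrict $\Delta t$, or weight Young's inequality), but centring at $x^*$ removes the issue entirely and is the reason the paper's step-size condition is exactly $2K\Delta t<\eta$.

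The genuine gap is your treatment of higher moments, which you dismiss as ``bookkeeping.'' Redoing the one-step analysis at power $2p$ with ``an estimate for $a^{2p}-b^{2p}$'' does not go through by the same split: the exact identity $2a(a-b)=a^2-b^2+(a-b)^2$ supplies the free non-negative term $(X_{k+1}-X_k)^2$ into which you absorb the anticipating part of the noise, but at power $2p$ the analogue is a second-order Taylor remainder $p(2p-1)\xi^{2p-2}(X_{k+1}-X_k)^2$ whose factor $\xi^{2p-2}$ (an intermediate point) gives no usable lower bound without a separate, non-trivial argument; moreover the leftover piece $\bigl(u_{k+1}^{2p-1}-u_k^{2p-1}\bigr)\Delta w_{k+1}$ would have to be estimated through $|u_{k+1}-u_k|\le |f(X_{k+1})|\Delta t+\sigma|\Delta w_{k+1}|$, and under a one-sided Lipschitz condition alone $f$ admits \emph{no} polynomial growth bound (think of the examples $f(x)=c_1x^{-m_1}+\dots-cx^5$), while terms of the type $|u_{k+1}|^{2p-2}|\Delta w_{k+1}|^2$ can neither be conditioned away (the factor is not $\mathcal{F}_{k\Delta t}$-measurable, precisely because the scheme is implicit) nor absorbed with an $O(1)$ Young weight (that destroys the Gronwall constant). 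The paper avoids every one of these problems by never performing a higher-power one-step analysis: it first applies Lemma \ref{gronwall_disc} to the \emph{quadratic} inequality, obtaining \eqref{key_rep}, i.e.
\begin{equation*}
u_k^2\;\le\; c \;+\; \frac{2}{(1-2K^+\Delta t)^{k}}\,M_k^{(1)}\;+\;\frac{1}{(1-2K^+\Delta t)^{k}}\,M_k^{(2)},
\end{equation*}
with the explicit martingales $M_k^{(1)}=\sum_{j=0}^{k-1}(1-2K^+\Delta t)^{j}u_j\,\Delta w_{j+1}$ and $M_k^{(2)}=\sum_{j=0}^{k-1}(1-2K^+\Delta t)^{j}\sigma^2\bigl(|\Delta w_{j+1}|^2-\Delta t\bigr)$; it then raises this inequality to the power $2q$, applies Doob's inequality to $M^{(1)},M^{(2)}$ to get
$\E\sup_{k}u_k^{4q}\le c+c\,\sup_{k}\E u_k^{2q}$,
and closes by an induction in $q$ that doubles the exponent at each step, starting from $\sup_k\E u_k^2<\infty$. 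If you want to stay close to your own derivation, the correct move is the same in spirit: your quadratic inequality has a right-hand side consisting of $\mathcal{F}_{k\Delta t}$-measurable quantities plus pure increments, so you may raise \emph{it} to the $p$-th power and expand; what you may not do is restart the discrete It\^{o} computation at power $2p$ and expect the $p=1$ absorption trick to survive.
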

\begin{proof}
The first assertion can be shown by a straightforward modification of the proof  of Lemma 3.2 in \cite{higham2003strong}.

For the proof of the second assertion, we will denote constants which are independent of $\D t$ and whose particular value is not important by $c$ regardless of their
value.
Due to \eqref{coerciv_help} and the continuity of $f$ there exists an $x^{*} \in ( \alpha, \beta)$ with $f(x^*)=0$, so
we can rewrite  BEM
as  
\begin{equation*}
X_{k+1} - x^*  = X_k - x^* + (f(X_{k+1}) -f(x^*))\D t  +  \sigma \D w_{k+1}.
\end{equation*}
Multiplying with $X_{k+1}-x^*$ and using the one-sided Lipschitz condition on $f$ yield that
\begin{align*}
 (X_{k+1}-x^*)^2 &\leq \left((X_k -x^*) + \sigma \D w_{k+1} \right)(X_{k+1}-x^*) + K^+ \D t  (X_{k+1}-x^*)^2.
\end{align*}
where $K^+= \max \{0,K \}$.
Moreover, using $ab \leq \frac{1}{2}(a^2+b^2)$ and rearranging the terms gives
\begin{align*}
(X_{k+1}-x^*)^2 &\leq  2K^+ \D t (X_{k+1}-x^*)^2  +\left((X_k -x^*) + \sigma \D w_{k+1} \right)^2  
\end{align*}
  and hence, with $u_k=X_k-x^*$, we have
\begin{align*} 
u_{k+1}^2 &\leq u_{k}^2+ 2K^+ u_{k+1}^2\D t +   2 u_k \D w_{k+1} + \sigma^2 |\D w_{k+1}|^2.
\end{align*}
Note that $1-2K^+\D t \in (0,1]$
and
\begin{align} \label{bound_prop}  \sup_{ \D t \in (0, \eta /(2K^+))} \,  \sup_{k=1, \ldots, \lceil T / \D t \rceil} \, (1-2 K^+ \D t)^{-k}  < \infty.
\end{align}
So, the above discrete version of Gronwall's Lemma now yields
\begin{align} \label{key_rep}
u_{k}^2 \leq c & +  \frac{2}{(1-2 K^+ \D t)^{k}}\underbrace{\sum_{j=0}^{k-1} (1-2 K^+ \D t)^{j} u_j \D w_{j+1}}_{=M_k^{(1)}} \qquad  \\ & + 
\frac{1}{(1-2 K^+ \D t)^{k}}\underbrace{\sum_{j=0}^{k-1} (1-2 K^+ \D t)^{j} \sigma^2  \left( |\D w_{j+1}|^2 - \Delta \right)}_{=M_{k}^{(2)}}, \nonumber
\end{align} 
from which we obtain easily by induction that
 \begin{align*}    \sup_{k=0, \ldots, \lceil T / \Delta \rceil} \mathbb{E} u_{k}^2 < \infty.
  \end{align*}
Using this it can be easily checked that the processes $M_{k}^{(i)}$, $i=1,2$, are square-integrable martingales with respect to the filtration $\mathcal{F}_{k \D t}$, $k=0,1, \ldots$.
Hence Doob's inequality and straightforward calculations give for any $q \geq 1$ that
 \begin{align*}   \mathbb{E} \sup_{k=0, \ldots, \lceil T / \Delta \rceil} |M_{k}^{(1)}|^{2q}  \leq c \cdot 
 \sup_{k=0, \ldots, \lceil T / \Delta t \rceil} \mathbb{E} |u_{k}|^{2q}
  \end{align*}
and 
\begin{align*}   \mathbb{E} \sup_{k=0, \ldots, \lceil T / \Delta t \rceil} |M_{k}^{(2)}|^{2q} < \infty.
  \end{align*}
So using  \eqref{key_rep}  and \eqref{bound_prop} we obtain
\begin{align*}   \mathbb{E} \sup_{k=0, \ldots, \lceil T / \Delta t\rceil}  u_{k}^{4q}  \leq
c + c \cdot \sup_{k=0, \ldots, \lceil T / \Delta t \rceil}   \mathbb{E} u_{k}^{2q}
  \end{align*}
for $\D t < \eta /(2K^+)$
and the assertion  follows now by an induction argument in $q \in \mathbb{N}$.
\end{proof}

\medskip

\subsection{The Main Result}

Here we prove our general theorem on the  strong convergence of the numerical scheme \eqref{eq:BEM} 
to the solution of SDE \eqref{eq:SDE}.

\smallskip

\begin{assp} \label{as:MA}
Let $T >0$ and $p\ge2$. We assume that the drift coefficient  $f: (\alpha, \beta) \rightarrow (\alpha, \beta)$ of SDE \eqref{eq:SDE} 
is twice continuously differentiable and
satisfies
\[ 
\sup_{t \in [0,T]}
 \E  \lev  f'(x(t)) \rev^{p}  +
\sup_{t \in [0,T]} \mathbb{E} \left| (f'f)(x(t)) + \frac{\sigma^2}{2} f''(x(t)) \right|^{p}
< \8.  
\]
\end{assp}

\smallskip

\begin{theorem} \label{th:mainTH}
Let $T >0$, $p\ge 2 $, $\eta \in (0,1)$ and  Assumptions \ref{as:0}, \ref{as:A} and \ref{as:MA} hold. 
Then, for $2K\D t < \eta$, there exists a constant $C_{p}>0$ (independent of $\D t$) such that 
\begin{equation}  \label{eq:Tconv}
\E\left[  \sup_{k=0, \ldots ,\lceil T / \Delta t \rceil}  \lev x(k \D t) -  X_{k}\rev^{p} \right] \le C_{p}  \cdot \D t^{p} .
\end{equation}
\end{theorem}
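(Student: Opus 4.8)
The plan is to compare the scheme \eqref{eq:BEM} with the exact solution step by step and to exploit the fact that, because the noise in \eqref{eq:SDE} is additive, the Brownian increments cancel exactly in the one-step error. Writing $t_k = k\D t$, $e_k = x(t_k) - X_k$ and $\Delta f_{k+1} = f(x(t_{k+1})) - f(X_{k+1})$, subtracting \eqref{eq:BEM} from the integrated form of \eqref{eq:SDE} gives
\begin{equation*}
e_{k+1} = e_k + \Delta f_{k+1}\,\D t + R_{k+1}, \quad R_{k+1} = \int_{t_k}^{t_{k+1}}\!\big(f(x(s)) - f(x(t_{k+1}))\big)\,ds.
\end{equation*}
First I would apply It\^o's formula to $s\mapsto f(x(s))$ (here $f\in C^2$ by Assumption \ref{as:MA}) to split $R_{k+1} = R_{k+1}^{(1)} + R_{k+1}^{(2)}$ into a drift remainder $R_{k+1}^{(1)} = -\int_{t_k}^{t_{k+1}}\int_s^{t_{k+1}}\big((f'f)(x(u)) + \tfrac{\s^2}{2}f''(x(u))\big)\,du\,ds$ and, after a stochastic Fubini step, a martingale remainder $R_{k+1}^{(2)} = -\s\int_{t_k}^{t_{k+1}}(u-t_k)f'(x(u))\,dw(u)$. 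Assumption \ref{as:MA} is exactly what is needed to bound the $L^p$-norms of the two integrands, so that $\E|R_{k+1}^{(1)}|^p \le c\,\D t^{2p}$ and, by the Burkholder--Davis--Gundy inequality, $\E|R_{k+1}^{(2)}|^p \le c\,\D t^{3p/2}$, while $R_{k+1}^{(2)}$ is conditionally centred given $\F_{t_k}$.

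Next I would derive an $L^2$-type recursion in the spirit of Lemma \ref{lem_sup_moments}. Multiplying the error identity by $e_{k+1}$, using the one-sided Lipschitz bound $\Delta f_{k+1}\,e_{k+1} \le K^+ e_{k+1}^2$ from \eqref{one_sided_lip} on the implicit increment, and applying $ab\le\tfrac12(a^2+b^2)$ to the remaining cross term yields, after absorbing $\tfrac12 e_{k+1}^2$,
\begin{equation*}
(1 - 2K^+\D t)\,e_{k+1}^2 \le (e_k + R_{k+1})^2 = e_k^2 + 2e_k R_{k+1} + R_{k+1}^2 .
\end{equation*}
Splitting $2e_kR_{k+1}^{(1)}$ by Young's inequality into a term $\D t\,e_k^2$ (absorbed into the coefficient) and a remainder of order $\D t^{-1}|R_{k+1}^{(1)}|^2$, this fits the hypotheses of the discrete Gronwall Lemma \ref{gronwall_disc} with $a_k = e_k^2$, $\lambda_k = 0$ and a suitable $\eta$ close to $2K^+$. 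Gronwall then produces a representation of $e_k^2$ as a bounded-weight sum whose summands are the martingale differences $e_j R_{j+1}^{(2)}$ together with the nonnegative remainders $|R_{j+1}^{(1)}|^2$, $|R_{j+1}^{(2)}|^2$ and $\D t^{-1}|R_{j+1}^{(1)}|^2$; the weights $(1-\eta\D t)^{j-k}$ are uniformly bounded as in \eqref{bound_prop}.

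Finally I would take the supremum over $k\le \lceil T/\D t\rceil$ and the $L^{p/2}$-norm (to reach $L^p$ for $e_k$). The nonnegative remainder sums are estimated by a power-mean inequality together with the per-step bounds above, each contributing exactly $\D t^p$ after summing the $O(\D t^{-1})$ terms. For the martingale part $M_k = \sum_{j<k}(1-\eta\D t)^{j}e_jR_{j+1}^{(2)}$, Doob's maximal inequality and BDG give $\E\sup_k|M_k|^{p/2} \le c\,\E\big(\sum_j e_j^2|R_{j+1}^{(2)}|^2\big)^{p/4}$, and I would bound $\sum_j e_j^2|R_{j+1}^{(2)}|^2 \le (\sup_j e_j^2)\sum_j|R_{j+1}^{(2)}|^2$ and apply Young's inequality to separate a small multiple of $\E\sup_k|e_k|^p$ from a pure remainder of order $\D t^p$. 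The crucial point is that the finiteness of $\E\sup_k|e_k|^p$, guaranteed a priori by the moment bounds of Lemma \ref{lem_sup_moments}, allows this small multiple to be absorbed into the left-hand side, closing the estimate at the rate $\D t^p$. I expect the main obstacle to be precisely this martingale term $\sum_j e_j R_{j+1}^{(2)}$, in which the quantity $e_j$ being estimated appears inside the sum: one must keep its martingale structure (to use Doob/BDG rather than a lossy pathwise bound that would ruin the supremum) and simultaneously carry enough powers of $\D t$ so that, after the self-absorption, the surviving remainder is $O(\D t^p)$. All the non-Lipschitz difficulties are confined to the drift and are handled solely through \eqref{one_sided_lip} and the moment hypotheses of Assumption \ref{as:MA}, never a global Lipschitz bound on $f$.
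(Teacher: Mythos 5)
Your proposal is correct, and its skeleton coincides with the paper's proof: the same It\^{o}-based splitting of the one-step residual into $R^{(1)}_{k+1}$ and $R^{(2)}_{k+1}$ with the bounds $\E|R^{(1)}_{k+1}|^p \le c\,\D t^{2p}$ and $\E|R^{(2)}_{k+1}|^p\le c\,\D t^{3p/2}$, the same one-sided-Lipschitz squaring step leading to $(1-2K^+\D t)\,e_{k+1}^2\le(e_k+R_{k+1})^2$, the discrete Gronwall representation of $e_k^2$ as a bounded-weight sum, and the Burkholder--Davis--Gundy inequality for the martingale part $\sum_{j<k} \gamma^j e_jR^{(2)}_{j+1}$. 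Where you genuinely diverge is in resolving the $e_j$-dependence inside that martingale term. The paper keeps the per-step structure: after a Jensen step it estimates $\E\, |e_s|^q|R^{(2)}_{s+1}|^q=\E\bigl[|e_s|^q\,\E\bigl[|R^{(2)}_{s+1}|^q\,\big|\,\F_{s\D t}\bigr]\bigr]\le c\,(1+\E|e_s|^{2q})^{1/2}\D t^{3q/2}$ via conditioning, Assumption \ref{as:MA} and Cauchy--Schwarz, then Young's inequality turns the sum into $c\sum_s\E|e_s|^{2q}\D t+c\,\D t^{2q}$, and a final Gronwall argument closes the estimate. You instead bound the BDG bracket pathwise by $(\sup_j e_j^2)\sum_j|R^{(2)}_{j+1}|^2$, split by Young at the level of $\E\sup$, and absorb the resulting $\varepsilon\,\E\sup_k|e_k|^p$ into the left-hand side; this eliminates the closing Gronwall step, but it makes the a priori finiteness $\E\sup_k|e_k|^p<\infty$ (from Lemma \ref{lem_sup_moments}) an essential ingredient of the argument rather than a background fact, and you must pick $\varepsilon$ small enough to beat the BDG and weight constants, which are $\D t$-independent, so this is legitimate. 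Both routes invoke Assumption \ref{as:MA} at the same exponent and yield the same rate $\D t^{p}$. One small repair: the term $\D t\,e_k^2$ produced by your Young split of $2e_kR^{(1)}_{k+1}$ carries the index $k$, not $k+1$, so it cannot be absorbed into the $\eta\, a_{n+1}\D t$ slot of Lemma \ref{gronwall_disc}; take $\lambda_n=\D t$ instead, which keeps the weights bounded by $e^T$ and changes nothing else.
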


\begin{proof}

Recall that we will denote constants which are independent of $\D t$ and whose particular value is not important by $c$ regardless of their
value.
By applying It\^{o}'s formula  on $f(x(t))$ we have
\begin{align} \label{eq:Taylor}
 x((k+1)\D t) = x(k\D t)  & + \int_{k\D t}^{(k+1)\D t} f(x((k+1) \D t)) dt \\ & \nonumber + \s \int_{k\D t}^{(k+1)\D t} dw(t) + R_{k+1}
\end{align} 
where 
\begin{align} \label{remainder}
 R_{k+1} = & - \int_{k\D t}^{(k+1)\D t} 
\int_{t}^{(k+1)\D t}  \left( (f'f)(x(s))  +  \frac{\s^{2}}{2}f''(x(s)) \right) ds dt
\\ &  \qquad 
 - \s \int_{k\D t}^{(k+1)\D t}\int_{t}^{(k+1)\D t} f'(x(s))  dw(s)  dt. \nonumber
\end{align}
We can decompose
$R_{s}$  as
$R_{s} = R_{s}^{(1)} + R_{s}^{(2)}$ with  
\begin{align*} \
 R_{s}^{(1)} &=  - \int_{(s-1) \D t}^{ s \D t} 
\int_{t}^{s \D t}  \left( (f'f)(x(u))  +  \frac{\s^{2}}{2}f''(x(u)) \right) du dt \\
 R_{s}^{(2)} &=- \s \int_{(s-1) \D t }^{ s \D t }\int_{t}^{s \D t} f'(x(u))  dw(u)  dt. 
\end{align*}

Using equations \eqref{eq:Taylor} and \eqref{eq:BEM} we   
have
\begin{align*}
 x((k+1)\D t) - X_{k+1}
 = x(k\D t) -  X_{k}  +  [ f( x((k+1)\D t)) - f (X_{k+1}) ] \D t 
             + R_{k+1}
\end{align*}
and thus
\begin{align*}
\bigl( x((k+1)\D t) - X_{k+1} - [ f( x((k+1) \D t )) - f (X_{k+1}) ] \D t \bigr)^{2}
 = \bigl( x(k \D t ) -  X_{k} + R_{k+1} \bigr)^{2}. 
\end{align*}
We arrive at
\begin{align*}
 & \lev x((k+1)\D t)  - X_{k+1} \rev^{2}  -  \lev x(k\D t) -  X_{k} \rev^{2} \\ 
 &  \quad =   2 ( x((k+1)\D t) - X_{k+1} )  [ f( x((k+1)\D t)) - f (X_{k+1}) ] \D t \\
& \qquad -  [ f( x((k+1)\D t)) - f (X_{k+1}) ]^2  \D t^{2} 
   + 2 ( x(k\D t) -  X_{k} ) R_{k+1}   + R_{k+1}^{2}. 
\end{align*}
Using the one-sided Lipschitz assumption on $f$ we obtain 
\begin{align*}
(1-2K^+ \D t) \lev x((k+1)\D t) - X_{k+1} \rev^{2} 
 \le \lev x(k\D t) -  X_{k} \rev^{2} + 2 ( x(k\D t) -  X_{k} ) R_{k+1}   + R_{k+1}^{2}.
\end{align*}
Let us define $e_{k} =  x(k\D t) - X_{k} $ and $\gamma_{ \D t}= (1-2K^+ \D t)$.
Note that $\gamma_{ \D t} \in (0,1]$
and
\begin{align} \label{help_ga}  \sup_{ \D t \in (0, \eta/(2K^+))} \,  \sup_{k=1, \ldots, \lceil T / \D t \rceil} \, \gamma_{ \D t}^{-k}  < \infty.
\end{align}
Now, Lemma \ref{gronwall_disc} yields
\begin{align} \label{key_rep_2}
e_{k}^2 \leq 2 \sum_{s=0}^{k-1} \gamma_{ \D t}^{s-k} e_s R_{s+1} + \sum_{s=0}^{k-1} \gamma_{ \D t}^{s-k}  R_{s+1}^2.
\end{align}

Since $\mathbb{E} \big[ R_{s+1}^{(2)}  \big | \mathcal{F}_{s \D t} \big] =0$, we have that
$$ \sum_{s=0}^{k-1}\g^{s}_{\D t} e_{s} R^{(2)}_{s+1}, \qquad k=0, \ldots, \lceil T / \Delta t \rceil,$$
is a martingale  and the Burkholder-Davis-Gundy inequality implies that
$$ \mathbb{E}  \left[ \sup_{k=0, \ldots , \ell }  \left|\sum_{s=0}^{k-1}\g^{s}_{\Delta t} e_{s} R^{(2)}_{s+1}  \right|^{q} \right]\leq c \,  
\E \left( \sum_{s=0}^{\ell -1 }  \g^{2s}_{\Delta t} \left| e_{s} \right|^2  \left| R^{(2)}_{s+1}  \right|^2 \right)^{q/2}
$$ for any $q \geq 1$ and $\ell=0, \ldots, \lceil T / \Delta t \rceil$.
So using the boundedness of $\gamma_{\D t}$ and \eqref{help_ga}
we have
$$ \mathbb{E}  \left[ \sup_{k=0, \ldots , \ell}  \left|\sum_{s=0}^{k-1}\g^{s-k}_{\Delta t} e_{s} R^{(2)}_{s+1}  \right|^{q} \right]\leq c \,  
\E \left( \sum_{s=0}^{\ell-1}  \left| e_{s} \right|^2  \left| R^{(2)}_{s+1}  \right|^2 \right)^{q/2}
$$ for any $q \geq 1$.
Using this and Jensen's inequality in \eqref{key_rep_2} we now arrive at 
\begin{align} \label{prae_gronwall}
 \E \left[ \sup_{k=0, \ldots ,\ell}| e_{k}|^{2q} \right] 
\le & c\, \lceil T / \Delta t \rceil^{q/2-1}  \sum_{s=0}^{ \ell -1} \E  | e_{s}|^q  \left| R^{(2)}_{s+1}  \right|^q \\
& + c\,\lceil T / \Delta t \rceil^{q-1} \sum_{s=0}^{\ell-1} \E |e_{s} |^{q} \left|R^{(1)}_{s+1} \right|^{q} \nonumber\\
& + c \,\lceil T / \Delta t \rceil^{q-1}   \sum_{s=0}^{\ell-1} \E |R_{s+1}|^{2q} . \nonumber
\end{align}
Now, Assumption \ref{as:MA},  Jensen's inequality and the Burkholder-Davis-Gundy inequality give that
\begin{align} \label{loc_est_3}
  \E \left[ \left|R_{k+1}^{(2)}\right|^m \Big{|} \, \F_{ k \Delta t} \right]   \leq  c \left(1 +|f'(x_{k \Delta})|^m \right) \D t^{3m/2}
\end{align}
and
\begin{align}
 \E \left[ \left|R_{k+1}^{(1)}\right|^m \Big{|} \, \F_{ k \Delta t} \right]    & \leq  \label{loc_est_4}
c \, \left( 1+ \left|(f'f)(x(k \Delta)) + \frac{\s^{2}}{2}f''(x(k \Delta))\right|^{m} \right) \D t^{2m} 
\end{align}
for all $m \leq p$. Thus, the Cauchy-Schwarz inequality and Assumption \ref{as:MA} yield that
$$ \E  | e_{s}|^q  \left| R^{(2)}_{s+1}  \right|^q = \E \left[ |e_{s}|^q \, \mathbb{E} \left[  \left| R^{(2)}_{s+1}  \right|^q \Big{|} \, \mathcal{F}_{ s \D  t} \right] \right]
\leq c \left(1+ \mathbb{E} | e_{s}|^{2q} \right)^{1/2}\D t^{3q/2}.
 $$
Hence Young's inequality implies
$$ 
c\, \lceil T / \Delta t \rceil^{q/2-1}  \sum_{s=0}^{ \ell -1} \E  | e_{s} |^q  \left| R^{(2)}_{s+1}  \right|^q 
\leq c  \sum_{s=0}^{ \ell -1} \E  \left| e_{s} \right|^{2q} \D t +  c \D t^{2q}. 
$$
Similar we also obtain
$$ c\,\lceil T / \Delta t \rceil^{q-1} \sum_{s=0}^{\ell-1} \E |e_{s} |^{q} \left|R^{(1)}_{s+1} \right|^{q}
\leq c  \sum_{s=0}^{ \ell -1} \E  \left| e_{s} \right|^{2q} \D t +  c \D t^{2q}.$$
Since finally
$$c \,\lceil T / \Delta t \rceil^{q-1}   \sum_{s=0}^{\ell-1} \E |R_{s+1}|^{2q}  \leq  c  \D t^{2q},$$
by inserting these three estimates in \eqref{prae_gronwall} we end up with
\begin{align*}
 \E \left[ \sup_{k=0, \ldots ,\ell}| e_{k}|^{2q} \right] 
\le c  \sum_{s=0}^{ \ell -1} \E  \left| e_{s} \right|^{2q} \D t  + c  \D t^{2q}
\end{align*}
and Gronwall's Lemma completes the proof.

\end{proof}

\smallskip

The above result and Lemma \ref{lem_sup_moments}
now give convergence (without a rate) in all $L^q$-norms:

\smallskip

\begin{coro} Under the assumptions of Theorem \ref{th:mainTH} we have
\begin{equation}  
\lim_{\D t \rightarrow 0} \, \E\left[  \sup_{k=0, \ldots ,\lceil T / \Delta t \rceil}  \lev x(k \D t) -  X_{k}\rev^{q} \right] =0
\end{equation}
 for all $q \geq 1$.
\end{coro}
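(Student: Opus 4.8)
The plan is to reduce everything to the rate-one estimate \eqref{eq:Tconv} of Theorem \ref{th:mainTH} together with the uniform moment bounds of Lemma \ref{lem_sup_moments}, and to interpolate between them. Throughout I write $Z_{\D t} = \sup_{k=0,\ldots,\lceil T/\D t\rceil} |x(k\D t) - X_{k}|$ for brevity, so that the claim is simply $\E[Z_{\D t}^{q}] \to 0$ as $\D t \to 0$, for every $q \geq 1$. The argument splits according to whether $q$ lies below or above the exponent $p$ fixed in Assumption \ref{as:MA}.

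For $1 \leq q \leq p$ the conclusion is immediate. By Lyapunov's inequality on the probability space $(\W,\F,\PP)$ we have $(\E Z_{\D t}^{q})^{1/q} \leq (\E Z_{\D t}^{p})^{1/p}$, so \eqref{eq:Tconv} gives $\E Z_{\D t}^{q} \leq C_{p}^{q/p}\, \D t^{q} \to 0$. Thus in this range we even retain a rate.

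The case $q > p$ is the only one that genuinely needs Lemma \ref{lem_sup_moments}, and it is where the (minor) work lies. Here I would interpolate between the exponent $p$, at which convergence with a rate is available, and a larger exponent $r > q$, at which only uniform boundedness is needed. Let $\theta \in (0,1)$ be determined by $\tfrac1q = \tfrac{\theta}{p} + \tfrac{1-\theta}{r}$, which is well defined since $p < q < r$. Raising the standard $L^{s}$-interpolation inequality to the power $q$ gives
\[
\E Z_{\D t}^{q} \le \bigl(\E Z_{\D t}^{p}\bigr)^{q\theta/p}\,\bigl(\E Z_{\D t}^{r}\bigr)^{q(1-\theta)/r}.
\]
To control the last factor I would use $Z_{\D t} \leq \sup_{t \in [0,T]} |x(t)| + \sup_{k} |X_{k}|$ together with $(a+b)^{r} \leq 2^{r-1}(a^{r}+b^{r})$, so that Lemma \ref{lem_sup_moments} yields
\[
\E Z_{\D t}^{r} \le 2^{r-1}\Bigl( \E \sup_{t \in [0,T]} |x(t)|^{r} + \E \sup_{k} |X_{k}|^{r} \Bigr) \le c
\]
uniformly over all sufficiently small $\D t$, namely those with $2K\D t < \eta$, which is exactly the relevant regime as $\D t \to 0$. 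Combining this with the rate-one bound $\E Z_{\D t}^{p} \leq C_{p}\, \D t^{p}$ from Theorem \ref{th:mainTH}, the interpolation inequality becomes $\E Z_{\D t}^{q} \leq c\,\D t^{q\theta} \to 0$, which settles the range $q > p$.

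The only real obstacle is this high-moment regime $q > p$: Theorem \ref{th:mainTH} by itself controls the error only up to the order $p$ appearing in Assumption \ref{as:MA}, so convergence in higher $L^{q}$ cannot be read off from it directly. The function of Lemma \ref{lem_sup_moments} is precisely to supply the uniform-in-$\D t$ moments of all orders that make the interpolation go through. It is worth noting that the rate does not survive the interpolation (the exponent $q\theta$ degenerates as $r \to \infty$), which is consistent with the Corollary asserting only convergence rather than a rate for $q > p$.
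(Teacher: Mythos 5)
Your proposal is correct and is essentially the paper's own argument: the paper proves this corollary in a single line by citing exactly the two ingredients you use, namely the rate-one $L^p$ bound of Theorem \ref{th:mainTH} and the uniform-in-$\D t$ moment bounds of Lemma \ref{lem_sup_moments}. Your Lyapunov step for $q \le p$ and the $L^p$--$L^r$ interpolation for $q > p$ simply spell out the details the paper leaves implicit.
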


\medskip
\begin{rmk}\label{comp_alf_1}
 In \cite{alfonsi2012}, independently of the research in this paper, a similar result to Theorem \ref{th:mainTH} is established for the case 
$D=(\alpha, \infty)$ and drift functions $f:D \rightarrow \mathbb{R}$, which are twice continuously
differentiable and satisfy a monotone condition (which is equivalent to our one-sided Lipschitz condition).
Using a continuous extension of BEM Alfonsi obtains the error bound \eqref{eq:Tconv}
under the assumption
\[ 
\E \left( \int_{0}^T
  \lev  f'(x(t)) \rev^{2}\, dt \right)^{p/2}  +  \mathbb{E} \left( \int_0^T \left| (f'f)(x(t)) + \frac{\sigma^2}{2} f''(x(t)) \right|  dt \right)^{p}
< \8
\]
for $p \geq 1$. This result is then applied to the CIR and CEV process, i.e.
 Propositions \ref{subsec_cir} and \ref{subsec_cev} are obtained.

Note that due to our  bound on the inverse moments on the LBE, see the subsection
below, we are also able to  cover SDEs like the the Heston 3/2-volatility and the Ait-Sahlia model.
Moreover, since we work under the assumption $D=(\alpha, \beta)$ we can also treat  the Wright-Fisher SDE and similar equations.
\end{rmk}

\medskip

\subsection{Boundedness of inverse moments of BEM}

If the drift coefficient has an even more specific structure, see the assumption below, we can also control the inverse moments of BEM. For the Heston-3/2 volatility and also the Ait-Sahalia model this will be helpful later on.

\smallskip

\begin{assp} \label{as:INV} Let $\alpha \geq  0$ and assume that the drift coefficient $f:(\alpha, \beta) \rightarrow (\alpha, \beta)$ has the structure
$$ f(x)= \frac{c_1}{x^{m_1}} + h(x) , \qquad x \in  (\alpha, \beta) $$
where 
\[
 \lev h(x) \rev \le c_2 \cdot (1 +\lev x\rev^{m_2} )\qquad x \in  (\alpha, \beta)
\]
for some $c_1,c_2>0$ and $m_1, m_2 >0$.
\end{assp}

\smallskip

Under the above assumption BEM can be written as 
$$ X_{k+1} = X_{k} + (  c_1 X_{k+1}^{-m_1} + h(X_{k+1}) )\D t + \s \D w_{k+1}  $$
and we have
\begin{align}  \label{inv_trick_1}
 \frac{1}{X_{k+1}^{m_1}} &= \frac{1}{c_1 \D t} \left(  X_{k+1} -X_k - h(X_{k+1}) \D t - \s \D w_{k+1} \right).   
\end{align}
Proceeding as in the proof of Theorem \ref{th:mainTH} we also have
\begin{align} \label{inv_trick_2}
\frac{1}{x(t_{k+1})^{m_1}} &= \frac{1}{c_1 \D t} \left(  x(t_{k+1}) -x(t_k) - h(x(t_{k+1})) \D t - \s \D w_{k+1} -R_{k+1}\right),   
\end{align}
with $R_{k+1}$ given by \eqref{remainder}. This can be used to derive the following result:

\smallskip

\begin{lemma} \label{bound_inv}
 Let $T >0$ and $p \geq 2$. Moreover, let the assumptions of Theorem \ref{th:mainTH} and  let also Assumption \ref{as:INV} hold.
Then there exists constants $C_p^{(1)},C_p^{(2)} >0$ such that we have
\begin{align*}
 \sup_{k=0, \ldots, \lceil T / \D t \rceil } \mathbb{E}  
| X_{k}|^{-m_1p} & < C_p^{(1)} \left( 1 +  \sup_{t \in [0,T] }   \mathbb{E}
| x(t)|^{-m_1 p} \right).
\end{align*}
and
\begin{align*}
\mathbb{E} \sup_{k=0, \ldots, \lceil T / \D t \rceil } 
| X_{k}|^{-m_1p} & < C_p^{(2)} \left( 1 +  \mathbb{E} \sup_{t \in [0,T] } 
| x(t)|^{-m_1 p} \right).
\end{align*}
\end{lemma}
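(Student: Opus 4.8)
The plan is to exploit the two representations \eqref{inv_trick_1} and \eqref{inv_trick_2} of the inverse powers. Subtracting them and writing $e_k = x(t_k) - X_k$ gives the pointwise identity
\begin{align*}
 X_{k+1}^{-m_1} = x(t_{k+1})^{-m_1} + \frac{1}{c_1 \D t}\Big( e_k - e_{k+1} - \big( h(X_{k+1}) - h(x(t_{k+1})) \big)\D t + R_{k+1} \Big),
\end{align*}
where I use $\alpha \geq 0$ so that $X_{k+1}, x(t_{k+1})>0$ and the moduli may be dropped. The whole proof rests on controlling the three error contributions inside the bracket after division by $\D t$ and taking $p$-th moments; the danger is exactly this division by $\D t$, which is harmless only because every contribution in the bracket is of order at least $\D t$.

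Next I would estimate the three terms separately. First, $\D t^{-1}(e_k - e_{k+1})$ is handled by Theorem \ref{th:mainTH}, which gives $\E |e_k|^p \le c\, \D t^p$ and hence $\D t^{-p}\E|e_k|^p \le c$. Second, for the drift difference the growth bound in Assumption \ref{as:INV} yields $|h(X_{k+1}) - h(x(t_{k+1}))| \le c\,(1 + |X_{k+1}|^{m_2} + |x(t_{k+1})|^{m_2})$; since this term is \emph{not} divided by $\D t$, its $p$-th moment is bounded uniformly in $k$ and $\D t$ by the moment estimates of Lemma \ref{lem_sup_moments}. Third, for $\D t^{-1}R_{k+1}$ I would use the conditional estimates \eqref{loc_est_3} and \eqref{loc_est_4} (with $m=p$) together with Assumption \ref{as:MA}, which give $\E|R_{k+1}|^p \le c\,(\D t^{3p/2} + \D t^{2p})$, so that $\D t^{-p}\E|R_{k+1}|^p \le c\,(\D t^{p/2} + \D t^{p}) \le c$.

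Combining these, raising the identity to the $p$-th power and using $|a+b+c+d|^p \le c(|a|^p + |b|^p + |c|^p + |d|^p)$ immediately yields the first (pointwise) bound: taking expectations and then the supremum over $k$ gives $\sup_k \E X_k^{-m_1 p} \le c\,(1 + \sup_t \E x(t)^{-m_1 p})$. This part needs no restriction on $p$ beyond $p\ge 1$ for the remainder.

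For the second assertion I would first take the supremum over $k$ inside, then the $p$-th power and the expectation. The terms in $e_k$ and in the drift difference are controlled exactly as above, now invoking the $\E\sup_k$ versions of Theorem \ref{th:mainTH} and Lemma \ref{lem_sup_moments}. The delicate term, and the main obstacle, is $\E \sup_k |R_{k+1}/\D t|^p$, because taking the pathwise supremum over roughly $\D t^{-1}$ increments costs a factor $\lceil T/\D t\rceil$ when bounded crudely by the sum $\sum_k \E|R_{k+1}|^p$. This produces $\D t^{-p}\lceil T/\D t\rceil (\D t^{3p/2} + \D t^{2p}) \le c\,(\D t^{p/2 - 1} + \D t^{p-1})$, which stays bounded precisely because $p \ge 2$; this is exactly where the hypothesis $p \ge 2$ enters. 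One could instead apply Doob's or the Burkholder–Davis–Gundy inequality to the martingale part $R^{(2)}$ to sharpen the estimate, but the crude summation already suffices. Assembling the three bounds gives the claimed constant $C_p^{(2)}$.
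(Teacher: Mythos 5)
Your proposal is correct and follows essentially the same route as the paper: subtract the two representations \eqref{inv_trick_1} and \eqref{inv_trick_2}, control the resulting error terms via \eqref{eq:Tconv}, the remainder estimates \eqref{loc_est_3}--\eqref{loc_est_4} together with Assumption \ref{as:MA}, bound the drift difference through the growth condition on $h$ and Lemma \ref{lem_sup_moments}, and conclude with the triangle inequality. In fact, your write-up is more explicit than the paper's terse proof, in particular in isolating the crude summation bound $\E \sup_k |R_{k+1}|^p \le \sum_k \E |R_{k+1}|^p$ as the step where the hypothesis $p \ge 2$ is genuinely used.
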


\smallskip

\begin{proof} We only prove the second assertion, the proof of the first assertion is similar.
Using \eqref{inv_trick_1}, \eqref{inv_trick_2}, \eqref{eq:Tconv}, \eqref{loc_est_3} and \eqref{loc_est_4}  we have
\begin{align*}
\mathbb{E} \sup_{k=0, \ldots, \lceil T / \D t \rceil } 
|x(t_k)^{-m_1}- X_{k}^{-m_1}|^{p} & \leq  c \left(1  +  \mathbb{E} \sup_{k=0, \ldots, \lceil T / \D t \rceil } |h(x(t_k)) - h(X_k) |^p \right).
\end{align*}
Since
$$ |h(x)-h(y)| \leq |h(x)| + |h(y)| \leq  c_2 (1+|x|^{m_2} +|y|^{m_2}), $$ Lemma \ref{lem_sup_moments} 
and the triangle inequality    finish the proof.
\end{proof}

\smallskip

In the next Lemma we establish a general a-priori estimate for uniform inverse moments of SDE \eqref{eq:SDE}
with the drift structure imposed by Assumption \ref{as:INV}. 
\begin{lemma} \label{lem:INV_SDE}
 Let $p \ge 2$. Let the assumptions of Lemma \ref{lem_sup_moments} hold and in addition let the drift of SDE \eqref{eq:SDE} satisfy Assumption \ref{as:INV}. Then there exists a 
constant $C_{p}>0$ such that we have
$$
\mathbb{E} \sup_{t \in [0,T] } 
| x(t)|^{- p} \le C_{p}\left( 1 +   \sup_{t \in [0,T] } \mathbb{E} | x(t)|^{- (p+2)}  \right). 
$$ 
\end{lemma}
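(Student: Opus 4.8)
The plan is to apply It\^o's formula to the function $g(x)=x^{-p}$ along the solution and to exploit two favourable structural features: the strictly negative contribution of the singular drift term $c_1 x^{-m_1}$, and the possibility of absorbing the resulting martingale part back into the left-hand side. Note first that Assumption \ref{as:INV} requires $\alpha \geq 0$, so by Assumption \ref{as:0} we have $x(t)\in(\alpha,\beta)$ with $x(t)>0$ almost surely, and hence $|x(t)|^{-p}=x(t)^{-p}$ is well defined. Writing $M(t)=-p\sigma\int_0^t x(s)^{-p-1}\,dw(s)$, It\^o's formula gives
$$ x(t)^{-p} = x_0^{-p} + \int_0^t b(s)\, ds + M(t), $$
where
$$ b(s) = -p c_1\, x(s)^{-p-1-m_1} - p\, x(s)^{-p-1} h(x(s)) + \tfrac{1}{2}\sigma^2 p(p+1)\, x(s)^{-p-2}. $$

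Since $c_1>0$ and $x(s)>0$, the first term of $b$ is non-positive, so discarding it only enlarges an upper bound for $\int_0^t b(s)\,ds$ and hence for $\sup_{t}x(t)^{-p}$. Taking the supremum over $t\in[0,T]$ and then the expectation, I would estimate the two remaining drift terms pointwise. Using $|h(x)|\le c_2(1+|x|^{m_2})$ together with $x(s)<\beta<\infty$, each surviving contribution is dominated by a negative moment of $x(s)$ of order at most $p+2$ (or by a bounded positive power of $x(s)$); by Lyapunov's inequality the lower-order inverse moments are in turn controlled by $1+\sup_{t}\mathbb{E}\,x(t)^{-(p+2)}$, so that $\mathbb{E}\int_0^T b^{+}(s)\,ds \le c\,\bigl(1+\sup_{t\in[0,T]}\mathbb{E}\,x(t)^{-(p+2)}\bigr)$.

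The decisive step concerns the stochastic integral $M$. A direct application of the Burkholder--Davis--Gundy inequality produces $\mathbb{E}\bigl(\int_0^T x(s)^{-2p-2}\,ds\bigr)^{1/2}$, i.e.\ a negative moment of order $2p+2$, which is far too high to be admissible. The key trick is to split $x(s)^{-2p-2}=x(s)^{-p}\cdot x(s)^{-p-2}$, so that
$$ \Big(\int_0^T x(s)^{-2p-2}\,ds\Big)^{1/2} \le \Big(\sup_{s\in[0,T]} x(s)^{-p}\Big)^{1/2}\Big(\int_0^T x(s)^{-p-2}\,ds\Big)^{1/2}, $$
and then to apply Young's inequality $ab\le \varepsilon a^2 + (4\varepsilon)^{-1}b^2$. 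Choosing $\varepsilon$ small enough, the term $\varepsilon\,\mathbb{E}\sup_{s} x(s)^{-p}$ produced this way is absorbed into the left-hand side, while the remainder is $c\int_0^T \mathbb{E}\,x(s)^{-p-2}\,ds\le cT\sup_{t}\mathbb{E}\,x(t)^{-(p+2)}$; this is precisely where the gap of two orders in the statement originates.

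Collecting the estimates and absorbing $\tfrac12\mathbb{E}\sup_t x(t)^{-p}$ yields the claim. The main obstacle is that this absorption is only legitimate once $\mathbb{E}\sup_{t}x(t)^{-p}<\infty$ is known a priori; to make this rigorous I would first run the entire argument with the solution stopped at $\tau_n=\inf\{t\ge 0: x(t)\notin(\alpha+\tfrac1n,\beta-\tfrac1n)\}\wedge T$, on which the integrand is bounded and $M$ is a genuine square-integrable martingale, derive the bound uniformly in $n$, and finally pass to the limit $n\to\infty$ via Fatou's lemma, using that $\tau_n\uparrow T$ because $x$ never leaves $(\alpha,\beta)$ by Assumption \ref{as:0}.
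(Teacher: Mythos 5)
Your proposal is correct and follows essentially the same route as the paper's proof: It\^o's formula applied to $x^{-p}$ under a localizing sequence of stopping times, the splitting $|x(s)|^{-2(p+1)}=|x(s)|^{-p}\cdot|x(s)|^{-(p+2)}$ inside the Burkholder--Davis--Gundy term, Young's inequality to absorb $\E\sup_t |x(t\wedge\tau_n)|^{-p}$ into the left-hand side, and Fatou's lemma to remove the localization. The only (harmless) difference is that you discard the non-positive singular drift term $-pc_1 x^{-(p+1+m_1)}$ outright, whereas the paper keeps it and distinguishes the cases $m_1>1$ and $m_1\in(0,1]$; both yield the same bound of order $p+2$ on the drift contribution.
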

\begin{proof}
Let ${\{\a_{n}\}}_{n\in \NN }$ and  ${\{\be_{n}\}}_{n\in \NN }$ be such that $\a_{n} \searrow\a $ and $\be_{n}\nearrow \be $, when $n \rightarrow \8$. Let us choose $n_{0}$ such that $\a_{n_{0}} \le x(0) \le \be_{n_{0}}$. Then for $n\ge n_{0}$, we define the stopping time $\t_{n} = \{ t>0: x(t)\notin (\a_{n},\be_{n})\}$.  By It\^{o}'s lemma we have
\begin{equation*}
 \begin{split}
  |x(t\wedge \t_{n})|^{-p}   = |x(0)|^{-p} & + \int_{0}^{t} \left( g(x(s)) -p | x(s)|^{-(p+1)}h(x(s))  \right)\1_{ [0,\t_{n}]}(s) ds \\
& -p \s \int_{0}^{t} | x(s)|^{-(p+1)}\1_{ [0,\t_{n}]}(s) dw(s),
 \end{split}
\end{equation*}
where $g(x) = - c_{1} p | x|^{-(p+1+m_{1})} + \fracs{\s^{2}p(p+1)}{2} 
|x|^{-(p+2)}$.  Observe that for $m_{1}>1$, $\lim_{x\searrow 0}g(x) = -\8$ and $\lim_{x\rightarrow \8}g(x) = 0$,
hence in that case there exists a $c>0$ such that $$\sup_{x>0}g(x)\le c.$$ If $m_{1}\in(0,1]$ then there exists a  
$c>0$ such that
$$
| g(x) | \le c(1 + |x|^{-(p+2)}  ).
$$

By Assumption \ref{as:INV} and Burkholder-Davis-Gundy's inequality we have
\begin{align*}
  \E \left[ \sup_{t\in[0,T]}|x(t\wedge \t_{n})|^{-p} \right]  & \le |x(0)|^{-p} 
+ \int_{0}^{T} c \left( 1 +  \E |x(s)|^{-(p+2)} + \E | x(s)|^{-(p+1)+m_{2}}  \right)ds \\
& \qquad +c \,\E \left( \int_{0}^{T} | x(s)|^{-2(p+1)}\1_{[0,\t_{n}]}(s) ds \right)^{1/2}\\ 
& \le |x(0)|^{-p} 
+ \int_{0}^{T} c \left( 1 +  \E |x(s)|^{-(p+2)}   \right)ds \\
& \qquad +c \,\E \left( \sup_{t\in[0,T]} |x(t\wedge \t_{n}) |^{-p} \int_{0}^{T} | x(s)|^{-(p+2)} ds \right)^{1/2}.
\end{align*}
Applying Young's inequality to the last summand of the above inequality now yields
$$ 
\E \left[ \sup_{t\in[0,T]}|x(t\wedge \t_{n})|^{-p} \right]   \le c\left(1 + \sup_{t \in [0,T] } \mathbb{E} | x(t)|^{- (p+2)}   \right).
$$  
and Fatou's Lemma completes the proof.

\end{proof}

\bigskip
\smallskip

\section{Examples}
 
In this section, we will apply our main result to several examples. To simplify the presentation, we
will  denote the numerical method $Y_k=F^{-1}(X_k)$, $k=0,1, \ldots$,
where  $X_k$, $k=0,1, \ldots$ is given by \eqref{BEM-lamp},
as
{\it Lamperti-backward Euler} (LBE) approximation of SDE \eqref{sde_target}.
Moreover, we will say that this method is {\it $p$-strongly convergent with order one}, if 
\begin{equation*} 
\E  \sup_{k=0, \ldots ,\lceil T / \Delta t \rceil}  \lev y(t_{k}) -Y_k  \rev^{p} \le C_p \cdot \D t^{p} .
\end{equation*} 
Finally, constants whose particular value is not important will be again denoted by $c$.

\medskip

\subsection{CIR process}\label{sub_cir}
Recall that the Cox-Ingersoll-Ross process is given by the SDE
\begin{equation} \label{cir_in_sub_cir}
dy(t) = \k(\o-y(t))dt + \s\sqrt{y(t)}d w(t), \quad t \geq 0, \qquad y(0)>0.
\end{equation}
If $2\k \o \ge\s^2$, then we have $D=(0, \infty)$ and Assumption \eqref{as:0} holds for
$(\a,\be)=(0,\8)$. Moreover, recall that
the transformed SDE using $F(y)=\sqrt{y}$ reads as
\begin{align} \label{eq:Volatility}
dx(t)  = f(x(t))dt + \frac{1}{2}\s dw(t), \quad t \geq 0, \qquad x(0)=\sqrt{y(0)}
\end{align}
with
$$f(x)=\frac{1}{2}\k \left( \Big(\o - \frac{\s^{2}}{4\k} \Big) x^{-1} - x  \right), \qquad x>0$$
and the BEM scheme is given by
\begin{align} \label{eq:CIRBEM_2}
X_{k+1}  =  X_k + f(X_{k+1})\Delta t +\frac{1}{2}\s \Delta w_{k+1}, \qquad k=0,1, \ldots
\end{align}
with $X_0=x(0)$.
Straightforward calculations give 
\begin{align*}
 (x-y)(f(x)-f(y)) \leq   - \frac{1}{2}\k(x - y)^{2}, \qquad x,y >0,
\end{align*}
so  Assumption \ref{as:A} holds with $K=-\kappa/2$. 
Observe also that
\[
f'(x) = -\frac{1}{2}\k  (\o_{v} x^{-2} + 1 ),
\]
and
\[
  (f'f)(x) +  \frac{\s^{2}}{2}f''(x)  =  - \frac{\k^2}{4} ( \o_{v}^2 x^{-3} - x )
                                  + \frac{1}{2}\k  \o_{v}x^{-3}\s^2 .
\]
So, for Assumption \ref{as:MA} to hold we need
\[
 \sup_{0\le t \le T}\E[x(t)^{-3p}] =  \sup_{0\le t \le T}\E[y(t)^{-\frac{3}{2}p}].
\]
Since
\begin{align} \label{cir_trans_inv}
 \sup_{0\le t \le T}\E [y(t)^{q}] < \8 \quad \text{for} \quad q>-\frac{2k\o}{\s^{2}},
\end{align}
see e.g. \cite{dereich2012euler}, Assumption \ref{as:MA} and  as a consequence Theorem \ref{th:mainTH} hold 
if $p<\frac{4}{3}\frac{k\o}{\s^{2}}$.
 
\medskip

In order to  approximate the original CIR process observe that 
\[
 ( x(t_{k})^{2} -  X_{t_{k}}^{2} ) = (x(t_{k}) +  X_{t_{k}})(x(t_{k}) -  X_{t_{k}}).
\]
Let $\e>0$ such that $p(1+\e)<\frac{4}{3}\frac{k\o}{\s^{2}}$. Then H\"{o}lder's inequality  gives
\begin{equation*}
 \begin{split}
  & \E  \left[  \sup_{0\le k \le \lceil T / \Delta t \rceil }\lev x(t_{k})^{2}  -  X_{t_{k}}^{2} \rev^{p}\right] \\
& \le  \left(\E \left[ \sup_{0\le k \le \lceil T / \Delta t \rceil }\lev x(t_{k}) +  X_{t_{k}} \rev^{p\frac{1+\e}{\e}} \right]\right)^{p\frac{\e}{1+\e}} 
    \left(\E \left[ \sup_{0\le k \le \lceil T / \Delta t \rceil }\lev x(t_{k}) -  X_{t_{k}} \rev^{p(1+\e)} \right]\right)^{\frac{1}{1+\e}}.
 \end{split}
\end{equation*}
Using Lemma \ref{lem_sup_moments} we obtain:

\smallskip
\begin{prop} \label{subsec_cir} Let $T>0$ and $2 \leq p<\frac{4}{3}\frac{\k\o}{\s^{2}}$. Then, 
the LBE approximation of the CIR process is $p$-strongly convergent with order one.
\end{prop}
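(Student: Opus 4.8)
The plan is to prove Proposition \ref{subsec_cir} by combining three ingredients already assembled in the excerpt: (i) the strong order-one convergence of the BEM scheme for the transformed equation \eqref{eq:Volatility}, which is Theorem \ref{th:mainTH} applied to the specific drift $f(x)=\frac{1}{2}\k(\o_v x^{-1}-x)$; (ii) the factorization $x(t_k)^2-X_{t_k}^2=(x(t_k)+X_{t_k})(x(t_k)-X_{t_k})$ together with the H\"older splitting already written out just above the statement; and (iii) the uniform moment bound for $X_k$ from Lemma \ref{lem_sup_moments}. The point is that $Y_k=F^{-1}(X_k)=X_k^2$ for the CIR process, so controlling $y(t_k)-Y_k=x(t_k)^2-X_{t_k}^2$ is exactly what is required.

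First I would verify that the hypotheses of Theorem \ref{th:mainTH} are met for the transformed CIR equation under the stated parameter constraint $2\le p<\frac{4}{3}\frac{\k\o}{\s^2}$. Assumption \ref{as:0} holds with $(\a,\be)=(0,\8)$ by the Feller condition $2\k\o\ge\s^2$; Assumption \ref{as:A} holds with $K=-\k/2\le0$ (so there is no step-size restriction), as recorded by the computation $(x-y)(f(x)-f(y))\le-\frac{1}{2}\k(x-y)^2$. For Assumption \ref{as:MA} I would use the explicit expressions for $f'$ and for $(f'f)+\frac{\s^2}{2}f''$ given in the excerpt: both involve negative powers of $x$ up to $x^{-3}$, so the required suprema of $p$-th moments reduce to $\sup_{t\le T}\E[x(t)^{-3p}]=\sup_{t\le T}\E[y(t)^{-\frac32 p}]$, which is finite precisely when $\frac32 p<\frac{2\k\o}{\s^2}$, i.e. $p<\frac43\frac{\k\o}{\s^2}$, by the inverse-moment bound \eqref{cir_trans_inv}. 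This gives
\begin{equation*}
\E\Big[\sup_{0\le k\le\lceil T/\D t\rceil}|x(t_k)-X_{t_k}|^{p(1+\e)}\Big]\le c\cdot\D t^{p(1+\e)}
\end{equation*}
for any $\e>0$ small enough that $p(1+\e)<\frac43\frac{\k\o}{\s^2}$, by one more application of Theorem \ref{th:mainTH} with exponent $p(1+\e)$ in place of $p$.

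Next I would control the other H\"older factor. Since $X_k=Y_k^{1/2}$ solves the BEM scheme \eqref{eq:CIRBEM_2} with drift satisfying Assumption \ref{as:A} with $K=-\k/2<0$, Lemma \ref{lem_sup_moments} yields $\E\sup_k|X_{t_k}|^q\le c$ for every $q\ge1$, uniformly in $\D t$; likewise the first assertion of Lemma \ref{lem_sup_moments} gives $\E\sup_{t\le T}|x(t)|^q<\8$. Hence the factor $\big(\E\sup_k|x(t_k)+X_{t_k}|^{p\frac{1+\e}{\e}}\big)^{p\frac{\e}{1+\e}}$ is bounded by a constant independent of $\D t$. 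Substituting both factors into the H\"older inequality already displayed gives
\begin{equation*}
\E\Big[\sup_{0\le k\le\lceil T/\D t\rceil}|x(t_k)^2-X_{t_k}^2|^p\Big]\le c\cdot\big(\D t^{p(1+\e)}\big)^{1/(1+\e)}=c\cdot\D t^{p},
\end{equation*}
which is exactly the asserted order-one bound for the LBE approximation $Y_k=X_k^2$.

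The main obstacle, and the reason for the slightly restrictive condition $p<\frac43\frac{\k\o}{\s^2}$, is ensuring Assumption \ref{as:MA}: the derivatives of $f$ blow up like $x^{-3}$ near the boundary $0$, so the whole argument hinges on the negative-moment bound \eqref{cir_trans_inv} for the CIR process being available down to order $-\frac32 p$, which in turn forces the stated upper bound on $p$. Everything else is bookkeeping: checking that $K<0$ removes any step-size constraint, confirming the $\e$-room in the H\"older exponents, and invoking the uniform moment bounds. I would take care to state explicitly that $\e$ can be chosen strictly positive precisely because the inequality $p<\frac43\frac{\k\o}{\s^2}$ is strict, which is what makes the final exponent collapse cleanly to $\D t^{p}$.
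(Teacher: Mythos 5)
Your proposal is correct and follows essentially the same route as the paper: verify Assumptions \ref{as:0}, \ref{as:A} (with $K=-\kappa/2$) and \ref{as:MA} via the inverse-moment bound \eqref{cir_trans_inv}, apply Theorem \ref{th:mainTH} with exponent $p(1+\e)$, and combine the factorization $x(t_k)^2-X_{t_k}^2=(x(t_k)+X_{t_k})(x(t_k)-X_{t_k})$ with H\"older's inequality and Lemma \ref{lem_sup_moments}. The only cosmetic slip is attributing an $x^{-3}$ singularity to both $f'$ and $(f'f)+\frac{\s^2}{2}f''$ (the former only blows up like $x^{-2}$), which does not affect the argument since the $x^{-3p}$ requirement is the binding one.
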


\smallskip

\subsection{Numerical Experiment}
Note that the unique solution to  \eqref{eq:CIRBEM_2} is given by
\[
 X_{k+1} =  \frac{1}{2+ \k\D t } \left( X_{k} + \frac{1}{2}\s \D w_{k+1}
    + \sqrt{       \left(X_{k} + \frac{1}{2}\s \D w_{k+1} \right)^2
     +   \k  \o_{v} \D t     } \right)
\]
with $ \o_v =  \o - \frac{\s^{2}}{4\k}.$
Hence implicit schemes not necessarily increase the computational complexity with comparison to classical explicit
procedures. 
\newline
In our numerical experiment, we focus on
the $L^2$-error at the endpoint $T=1$, so we let
\begin{equation*}
e_{\D t}=\E \lev x(T)-X_{T}\rev^2.
\end{equation*}
For our numerical experiment we set $\o=0.125$, $\k=2$, and $\s=0.5$. This gives $\fracs{2\k\o}{\s^2}=2$ and corresponds to the critical parameters for which Dereich et al. \cite{dereich2012euler} established strong convergence of order one half for linearly interpolated BEM \eqref{eq:CIRBEM_2} with respect to the uniform $L^2$-error criteria. Although, theoretical results obtained in this paper impose slightly more restrictive conditions for parameters than those in Dereich et al., performed numerical experiment suggests that for practical simulations condition $1< p <\fracs{2\k\o}{\s^2}$ suffices for uniform-$L^p$  convergence with order one. 
Although an explicit solution to
\eqref{eq:Volatility} is unknown, Theorem \ref{th:mainTH} guarantees
that BEM  strongly converges to the true solution. Therefore, it
is reasonable to take BEM with a very small time step, we choose $\D t = 2^{-15}$,
as a reference solution. We then compare it to BEM evaluated with
$( 2^{4}\D t, 2^{5}\D t, 2^{6}\D t, 2^{7}\D t)$
in order to estimate the rate of the $L^2$-convergence, where we estimate $e_{\D t}$ by a Monte-Carlo procedure, i.e.
$$ e_{\D t} \approx \frac{1}{10^3} \sum_{i=1}^{10^3}  \lev x^{(i)}(T)-X^{(i)}_{T}\rev^2. $$
Here $x^{(i)}(T),X^{(i)}_{T}$ are iid copies of $x(T),X_{T}$.
We plot $e_{\D t}$ against $\D t$ on a log-log scale, i.e.
if we assume
that a power law relation $e_{\D t} = C \D t^{q}$ holds for some
constant $C$ and $q$, then we have $\log e_{\D t} = \log C + q \log \D t$.
For our simulation, 
a least squares fit for $\log C$ and $q$ yields the value $1.9332$ for $q$
with a least square residual of $0.016$.  Hence, our results are consistent
with strong order of convergence equal to one.

\begin{figure}[htb]
\begin{center}
 \includegraphics[scale=0.5]{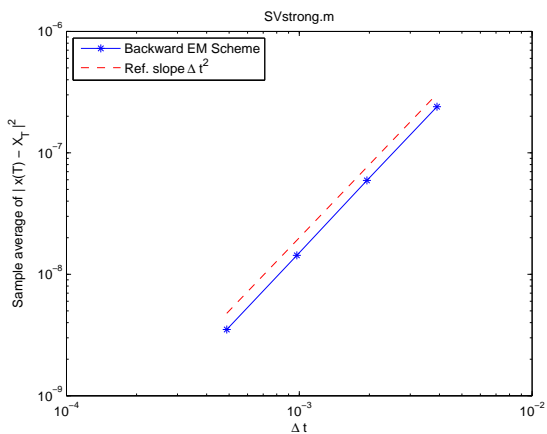}
\end{center}
        \caption[bif]{Strong error plot for backward Euler scheme applied to CIR process. 
        }
        \label{fig:strong}
\end{figure}

\medskip

\subsection{Heston 3/2-volatility}

In \cite{heston1997simple} the inverse of a CIR process is used as a  stochastic volatility process, which gives the
so-called Heston-$3/2$-volatility 
\begin{align}
dy(t) &=  c_1 y(t)( c_2 -  y(t)) \, dt +  c_3 y(t)^{3/2}  \, dw(t), & \quad t \geq 0, \qquad y(0)>  0
\label{3/2_cir}
\end{align} where $c_1,c_2,c_3 >0$.
Using $F(y)=y^{-1/2}$  leads to
\begin{align}
 \label{3/2_cir_trans}
dx(t)= \left( \left( \frac{c_1}{2} + \frac{3c_3^2}{8}\right) x(t)^{-1}  - \frac{c_1c_2}{2} x(t)\right) \,dt - \frac{c_3}{2} \, d w(t),
\end{align}
which coincides with 
\eqref{eq:Volatility} if we use a reflected Brownian motion, i.e. $-w$, which is still a Brownian motion,
and
$$ \sigma=c_3, \qquad \theta= \frac{1}{c_2} + \frac{c_3^2}{c_1c_2}, \qquad \kappa= c_1c_2. $$
Hence we have the relation
$$ \frac{ \kappa \theta}{\sigma^2} = 1+\frac{c_1}{c_3^2}, $$
so Theorem \ref{th:mainTH} holds here for $p<\fracs{4}{3}(1+\frac{c_1}{c_3^2})$.
Note that  the Heston-$3/2$-volatility is one of the SDEs which does not have finite moments of any order. 
As the inverse of the CIR process it has finite $q$-moments up to order $q<2 + \frac{2c_1}{c_3^2}$.

Now, for  transforming back we have to control the inverse moments of the BEM scheme for CIR. Here Lemma \ref{bound_inv} and \ref{lem:INV_SDE} give 
\begin{align*}
\mathbb{E} \sup_{k=0, \ldots, \lceil T / \D t \rceil } 
| X_{k}|^{-p} &  \le C_{p} \left( 1 +   \sup_{t \in [0,T] } \mathbb{E} | x(t)|^{- (p+2)}  \right)
\end{align*}
for $p<\fracs{4}{3}(1+\frac{c_1}{c_3^2})$.
From the analysis of the CIR process we have that 
$$
\sup_{t \in [0,T]} \mathbb{E} | x(t)|^{-(p+2)} < \8 \quad \text{for} \quad p< 4 \left( \frac{1}{2}+ \frac{c_1}{c_3^2} \right).
$$
To establish  the convergence result  for the LBM for the Heston-3/2 volatility note that
$$ \left| \frac{1}{X_k^2} - \frac{1}{x(t_k)^2} \right| = \frac{X_k + x(t_k)}{X_k^2 x(t_k)^2} |X_k - x(t_k)| \leq  
c( |X_k|^{-3} +  |x(t_k)|^{-3}) |X_k - x(t_k)|.
$$
Finally using H\"older's inequality 
gives
 \begin{align*}
& \E \left[\sup_{k=0, \ldots, \lceil T / \D t \rceil } \left| \frac{1}{X_k^2} - \frac{1}{x(t_k)^2} \right|^p \right] 
\\ & \qquad \qquad \leq  
c \left(  \left( \E  \sup_{k=0, \ldots, \lceil T / \D t \rceil }|X_k|^{-4p} \right)^{3/4} +  \left( \E \sup_{k=0, \ldots, \lceil T / \D t \rceil } |x(t_k)|^{-4} \right)^{3/4} \right)\\ 
\\ & \qquad \qquad \qquad  \times \left(\E \sup_{k=0, \ldots, \lceil T / \D t \rceil } |X_k - x(t_k)|^{4p} \right)^{1/4}.
\end{align*}
Hence we obtain:
\medskip
\begin{prop}
If  $ 1 \leq p < \fracs{1}{3} +\fracs{1}{3}\frac{c_1}{c_3^2}$, then the LBE approximation of the Heston-3/2 process is $p$-strongly convergent
with order one.
\end{prop}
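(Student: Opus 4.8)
The plan is to assemble the result from the three displayed estimates that precede the statement, exactly in the order they are given, and then to track the range of the exponent $p$ through a H\"older argument. First I would recall that the LBE approximation of the Heston-3/2 process is $Y_k = F^{-1}(X_k) = X_k^{-2}$, where $X_k$ is the BEM scheme \eqref{eq:CIRBEM_2} for the transformed (CIR-type) SDE \eqref{eq:Volatility} with the parameter identification $\sigma=c_3$, $\kappa=c_1c_2$, $\theta = \frac{1}{c_2} + \frac{c_3^2}{c_1 c_2}$, so that $y(t_k) = x(t_k)^{-2}$ and hence the quantity to be estimated is exactly $\E \sup_k |X_k^{-2} - x(t_k)^{-2}|^p$. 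The key algebraic observation, already recorded in the excerpt, is the factorization
\[
\left| \frac{1}{X_k^2} - \frac{1}{x(t_k)^2} \right| \leq c\,( |X_k|^{-3} + |x(t_k)|^{-3})\, |X_k - x(t_k)|,
\]
which splits the error into an inverse-moment factor and the genuine strong-error factor $|X_k - x(t_k)|$.

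Next I would apply H\"older's inequality to the supremum over $k$ of the $p$-th power, separating the two factors with conjugate exponents $4/3$ and $4$ (so that the inverse-moment terms appear with power $4p$ and exponent $3/4$, and the strong-error term with power $4p$ and exponent $1/4$). This is precisely the displayed three-line estimate immediately before the proposition. It then remains to control each of the three resulting expectations and to intersect the admissible ranges of $p$. For the strong-error factor $\left(\E \sup_k |X_k - x(t_k)|^{4p}\right)^{1/4}$ I invoke Theorem \ref{th:mainTH}, which yields a bound of order $\D t^{p}$ and is valid (via Assumption \ref{as:MA} for the CIR drift, as in Section \ref{sub_cir}) provided $4p < \frac{4}{3}\,\frac{\kappa\theta}{\sigma^2} = \frac{4}{3}(1 + \frac{c_1}{c_3^2})$, i.e. $p < \frac{1}{3}(1 + \frac{c_1}{c_3^2})$. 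For the inverse-moment factors I combine Lemma \ref{bound_inv} (transferring inverse moments of the scheme to those of the SDE) with Lemma \ref{lem:INV_SDE} (bounding uniform inverse moments of the SDE by pointwise ones), and then with the finiteness of $\sup_t \E|x(t)|^{-(p'+2)}$ coming from the moment bound \eqref{cir_trans_inv} for the CIR process.

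The main obstacle — and the only place where care is genuinely required — is the bookkeeping of exponent ranges: the three factors impose three separate constraints on $p$, and one must check that the binding one is indeed $p < \frac{1}{3} + \frac{1}{3}\frac{c_1}{c_3^2}$ stated in the proposition, with the inverse-moment constraints being less restrictive. Concretely, the term $\left(\E\sup_k |X_k|^{-4p}\right)^{3/4}$ demands (through Lemmas \ref{bound_inv} and \ref{lem:INV_SDE}) finiteness of $\sup_t \E|x(t)|^{-(4p+2)}$, and the excerpt records that $\sup_t \E|x(t)|^{-(p+2)}$ is finite for $p < 4(\frac12 + \frac{c_1}{c_3^2})$; substituting $4p$ in place of $p$ shows this is compatible with, and no stricter than, the strong-convergence constraint, so the latter is the operative bound. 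Once the three ranges are seen to be simultaneously satisfied for $1 \le p < \frac{1}{3} + \frac{1}{3}\frac{c_1}{c_3^2}$, multiplying the three controlled factors gives an overall bound of order $\D t^{p}$, which is exactly the definition of $p$-strong convergence with order one, completing the proof.
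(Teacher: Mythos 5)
Your proposal is correct and follows the paper's own proof essentially step for step: the same factorization of $\lev X_k^{-2}-x(t_k)^{-2}\rev$ into an inverse-moment factor times the strong error, the same H\"older split with conjugate exponents $4/3$ and $4$, Theorem \ref{th:mainTH} applied at exponent $4p$ (giving the binding constraint $p<\fracs{1}{3}+\fracs{1}{3}\frac{c_1}{c_3^2}$), and Lemmas \ref{bound_inv} and \ref{lem:INV_SDE} combined with the CIR inverse-moment bound to check that the inverse-moment constraints are weaker. Your exponent bookkeeping is accurate, and in writing the inverse-moment factor for the exact solution with power $-4p$ you in fact correct a small typo in the paper's displayed H\"older estimate, which writes $\lev x(t_k)\rev^{-4}$ where $\lev x(t_k)\rev^{-4p}$ is meant.
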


\smallskip

\subsection{CEV  process}

Another popular SDE in finance is the mean reverting constant elasticity of variance process (\cite{cox1975notes}) given by
\begin{equation} \label{eq:SABR}
dy(t) = \k(\o-y(t))dt + \s y(t)^{\a}d w(t) \\
\end{equation}
where $0.5 < \a<1$, $\k,\o,\s >0$. 
By the Feller test  we have $D=(0, \infty)$ .
Applying It\^{o}'s formula to the function $F(y(t))=y(t)^{1-\a}$
we obtain
that Assumptions \ref{as:0} holds with $(\a,\be)=(0,\8)$ and
\begin{equation} \label{eq:THeston}
d x(t)  =   f(x(t))dt
              + (1-\a) \s dw(t) \\
\end{equation}
with
$$f(x)= (1-\a) \left(  \k \o x^{-\frac{\a}{1-\a}} - \k  x
               - \frac{\a \s^{2}}{2} x^{-1} \right), \qquad x >0.$$
Again we need to check the remaining assumptions.
Since $\a>0.5$, and consequently $\frac{1}{1-\a}>2$, we have for
\[
  f'(x) = -\a   \k \o x^{-\frac{1}{1-\a}} - (1-\a) \k  + (1-\a) \frac{\a \s^{2}}{2} x^{-2}, \qquad x>0
\]
that
$$ \lim_{x \rightarrow 0} f'(x)= - \infty, \qquad \lim_{x \rightarrow \infty } f'(x) = -(1- \a) \k$$
and hence there exists a $c>0$ such that
$$ \sup_{x >0} f'(x) \leq c.$$
Now the mean value theorem implies
\[
 (x-y) (f(x) - f(y)) \le c \lev x- y\rev^{2}, \qquad x,y>0,
\] i.e. the drift coefficient is one-sided Lipschitz.
We have moreover that
\[
 f''(x) =  \frac{\a}{1-\a}  \k\o x^{-\frac{2-\a}{1-\a}}  - (1-\a) \a \s^{2} x^{-3} .
\]
However, from \cite{berkaoui2007euler} it is known that
\begin{equation}  \label{moments_cev}
 \sup_{0\le t \leq T} \E\left\vert y(t)\right\vert
 ^p< \8 \qquad \textrm{for any} \quad p \in \mathbb{R},\,\, T>0,
\end{equation} and 
therefore also Assumption \ref{as:MA} holds and Theorem \ref{th:mainTH} can be applied for any $p \geq 1$.
For the back transformation note that the mean value theorem yields
\[
 | (x^{\frac{1}{1-\a}}-y^{\frac{1}{1-\a}}) | \le \frac{1}{1-\a}( x^{\frac{\a}{1-\a}} + y^{\frac{\a}{1-\a}}) | x-y |, 
\]
Using Lemma \ref{lem_sup_moments}  we have:

\smallskip

\begin{prop}\label{subsec_cev}
Let $p \geq 1$. The LBE approximation of the CEV process is $p$-strongly convergent
with order one.
\end{prop}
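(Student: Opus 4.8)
The plan is to follow the same template already used for the CIR and Heston-3/2 examples: verify the hypotheses of Theorem~\ref{th:mainTH} for the transformed SDE~\eqref{eq:THeston}, obtain order-one convergence for the auxiliary scheme $X_k$, and then transfer the rate back through $F^{-1}$ to the CEV process via a mean-value-theorem bound combined with H\"older's inequality. The paper has already done almost all of this in the text preceding the statement, so the proof is mostly a matter of assembling those pieces.

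First I would record that Assumption~\ref{as:0} holds with $(\a,\be)=(0,\8)$ (Feller test) and that Assumption~\ref{as:A} holds because $\sup_{x>0}f'(x)\le c$, which gives the one-sided Lipschitz bound $(x-y)(f(x)-f(y))\le c|x-y|^2$ by the mean value theorem. Next, since both $f'$ and $(f'f)+\tfrac{\s^2}{2}f''$ are polynomial expressions in $x^{-1}$ (the worst inverse power being $x^{-3}$), and since \eqref{moments_cev} guarantees $\sup_{0\le t\le T}\E|y(t)|^p<\8$ for \emph{every} $p\in\RR$ — hence $\sup_{0\le t\le T}\E|x(t)|^{q}=\sup_{0\le t\le T}\E|y(t)|^{q(1-\a)}<\8$ for every $q\in\RR$ — Assumption~\ref{as:MA} is satisfied for all $p\ge2$. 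Therefore Theorem~\ref{th:mainTH} applies and yields
\[
\E\Bigl[\sup_{k=0,\ldots,\lceil T/\D t\rceil}|x(t_k)-X_k|^{p}\Bigr]\le c\,\D t^{p}
\]
for every $p\ge1$.

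For the back-transformation, $Y_k=X_k^{1/(1-\a)}$ approximates $y(t_k)=x(t_k)^{1/(1-\a)}$. The displayed mean value bound
\[
|x^{\frac{1}{1-\a}}-y^{\frac{1}{1-\a}}|\le \tfrac{1}{1-\a}\bigl(x^{\frac{\a}{1-\a}}+y^{\frac{\a}{1-\a}}\bigr)|x-y|
\]
lets me write $|Y_k-y(t_k)|\le c\,(X_k^{\frac{\a}{1-\a}}+x(t_k)^{\frac{\a}{1-\a}})\,|X_k-x(t_k)|$, and then an application of H\"older's inequality (say with conjugate exponents $(1+\e)/\e$ and $1+\e$) separates the factor controlling $\sup_k(X_k^{\frac{\a p}{1-\a}}+x(t_k)^{\frac{\a p}{1-\a}})$ from the convergence factor $\sup_k|X_k-x(t_k)|^{p(1+\e)}$. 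Crucially — and this is the point where the CEV case is \emph{easier} than Heston-3/2 — the exponents appearing here are \emph{positive} powers of $X_k$ and $x(t_k)$, so I need only the ordinary moment bounds, not inverse moments: Lemma~\ref{lem_sup_moments} bounds $\E\sup_k|X_k|^{q}$ uniformly in $\D t$ for all $q\ge1$, and \eqref{moments_cev} handles the $x(t_k)$ moments. Since $\sup_{0\le t\le T}\E|y(t)|^p<\8$ for all real $p$, there is no parameter restriction, and the convergence factor contributes the full $\D t^{p}$ from Theorem~\ref{th:mainTH}; taking $\e\to0$ (or simply fixing any $\e>0$) gives the claimed order-one rate for every $p\ge1$.

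The only step requiring a little care is confirming that every moment needed is finite uniformly in $\D t$. For the true solution this is immediate from \eqref{moments_cev}; for the scheme it rests on Lemma~\ref{lem_sup_moments}, whose hypothesis $2K\D t<\eta$ is harmless here since the one-sided Lipschitz constant can be taken as a fixed $c$ and we may restrict to small $\D t$. I do not anticipate a genuine obstacle: unlike Heston-3/2, the CEV back-transformation involves no negative moments and \eqref{moments_cev} supplies integrability of all orders, so the H\"older split closes without any constraint linking $p$ to the model parameters.
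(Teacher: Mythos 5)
Your proposal is correct and follows essentially the same route as the paper: check Assumptions \ref{as:0}, \ref{as:A} and \ref{as:MA} for the transformed SDE (with Assumption \ref{as:MA} supplied by \eqref{moments_cev}, which gives moments of \emph{every} real order), apply Theorem \ref{th:mainTH}, and transfer the rate back through $F^{-1}$ via the mean value bound, H\"older's inequality and Lemma \ref{lem_sup_moments}, exactly as the paper does. One immaterial slip: the worst inverse power appearing in $(f'f)+\frac{\s^2}{2}f''$ is not $x^{-3}$ but $x^{-\frac{1+\a}{1-\a}}$, which exceeds $3$ for every $\a>1/2$ and blows up as $\a\to1$; this changes nothing, since \eqref{moments_cev} covers negative moments of all orders anyway.
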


\smallskip

\subsection{Wright-Fisher Diffusion}
The Wright-Fisher SDE that originated from mathematical biology,\cite{either1986markov} and recently is also gaining popularity in 
mathematical finance \cite{howison2010risk,larsen2007diffusion} reads as
\begin{align*}
dy(t)=(a-b y(t))dt + \gamma \sqrt{|y(t)(1-y(t))|}\, dw(t), \qquad t \geq 0, \qquad y_0 \in (0,1)
\end{align*}
with $a,b,\gamma>0$. If
\begin{align} \label{wf_1} \frac{2a}{\gamma^2} \geq 1 \qquad \textrm{and} \qquad \frac{2(b- a)}{\gamma^2} \geq 1, \end{align}
then this SDE has a unique strong solution with
$$ \mathbb{P}(y(t) \in (0,1), \,\, t \geq 0 ) =1,$$
see \cite{karlin1981}. 
Using 
$$ {F}(y)= 2 \arcsin(\sqrt{y}), \qquad y \in (0,1)$$
we obtain 
$$  dx(t) = f(x(t))dt + \gamma\, dw(t), \qquad t \geq 0, \qquad x_0 = 2 \arcsin(\sqrt{y_0})$$
with
$$ f(x)=\Big (a- \frac{\gamma^2}{4} \Big) \cot \Big( \frac{x}{2} \Big) - \Big(b-a-\frac{\gamma^2}{4} \Big) \tan \Big( \frac{x}{2} \Big), \qquad x \in (0 , \pi) . $$
Since
$$ f'(x)= -\frac{1}{2}\Big (a- \frac{\gamma^2}{4} \Big) \left(1+\cot^2 \Big( \frac{x}{2} \Big) \right)  - \frac{1}{2}\Big(b-a-\frac{\gamma^2}{4} \Big) \left(1+\tan^2 \Big( \frac{x}{2} \Big) \right)$$
the mean value theorem implies that Assumption  \ref{as:A} is satisfied with $K=0$.
Now note that
\begin{align*}
f''(x) &= \frac{1}{2}\Big (a- \frac{\gamma^2}{4} \Big) \cot\Big( \frac{x}{2} \Big)  \left(1+\cot^2 \Big( \frac{x}{2} \Big) \right) \\ & \qquad  - \frac{1}{2}\Big(b-a-\frac{\gamma^2}{4} \Big) \tan\Big( \frac{x}{2} \Big)\left(1+\tan^2 \Big( \frac{x}{2} \Big) \right)
\end{align*}
and 
$$  (f'f)(x(t))+ \frac{\gamma^2}{2}f''(x(t))=  (f'f)(2\arcsin(y(t)))+ \frac{\gamma^2}{2}f''(2\arcsin(y(t))).         $$
Since
$$  \cot( \arcsin(y(t))) = \sqrt{ \frac{1-y(t)}{y(t)}},\qquad    \tan( \arcsin(y(t)))= \sqrt{ \frac{y(t)}{1-y(t)}}$$
and $y(t) \in (0,1)$ we obtain
$$  \left| f'(x(t))\right| \leq c \left( 1+ |y(t)|^{-1}  + |1-y(t)|^{-1}    \right)   $$
and
$$  \left| (f'f)(x(t))+ \frac{\gamma^2}{2}f''(x(t)) \right| \leq c \left( 1+ |y(t)|^{-3/2}  + |1-y(t)|^{-3/2} \right)$$
for some constant $c>0$, depending only on $a,b, \gamma >0$.
 Using Theorem 4.1 in \cite{hurd2008}
and establishing   uniform convergence of the given series expressions in $t$
 using asymptotic bounds on the Jacobi polynomials we have that
$$  \sup_{t \in [0,T]} \mathbb{E} |y(t)|^{q_1} < \infty $$
if 
$  q_1 >  -  \frac{2a}{\gamma^2} $
and 
$$  \sup_{t \in [0,T]} \mathbb{E} |1-y(t)|^{q_2} < \infty$$
if $ q_2 >  - \frac{2(b- a)}{\gamma^2}  $.
So Assumption \ref{as:MA} is satisfied if
$$  \frac{4}{3 \gamma^2}\min \{ a,b-a   \} >p. $$
\smallskip

Now Theorem  \ref{th:mainTH} gives
\begin{equation*}
\E\left[  \sup_{k=0, \ldots ,\lceil T / \Delta t \rceil}  \lev x(t_{k}) -  X_{k}\rev^{p} \right] \le C_{p}  \cdot \D t^{p} \qquad \textrm{for} \qquad  p < \frac{4}{3 \gamma^2}\min \{ a,b-a \}
\end{equation*}
 and transforming back yields
\begin{equation*}
\E\left[  \sup_{k=0, \ldots ,\lceil T / \Delta t \rceil}  \lev y(t_{k}) -  \sin^2\Big( \frac{X_{k}}{2}\Big)\rev^{p} \right] \le C_{p}  \cdot \D t^{p} .
\end{equation*}
under the same assumption on the parameters
since the $\sin$-function is bounded.

\smallskip
\begin{prop}
Let $2 \leq p <  \frac{4}{3 \gamma^2}\min \{ a,b-a   \}$. Then the LBE approximation of the Wright-Fisher  process is $p$-strongly convergent
with order one.
\end{prop}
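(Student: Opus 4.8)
The plan is to reduce the Wright-Fisher convergence claim to a direct application of the machinery already built in Sections 2.1--2.3, following exactly the template used for the CIR and CEV examples. The structure is: (i) transform the SDE via the Lamperti map $F(y)=2\arcsin(\sqrt{y})$ into the additive-noise equation with drift $f$; (ii) verify Assumptions \ref{as:0}, \ref{as:A}, \ref{as:MA} for the transformed equation; (iii) invoke Theorem \ref{th:mainTH} to get the rate-one estimate on $\sup_k|x(t_k)-X_k|^p$; (iv) transform back using boundedness of the $\sin$ function. Most of these verifications are already carried out in the preamble to the proposition statement, so the proof is essentially a matter of assembling them.

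\textbf{Verifying the structural assumptions.} First I would note that Assumption \ref{as:0} holds with $(\alpha,\beta)=(0,\pi)$, since the condition \eqref{wf_1} guarantees $\mathbb{P}(y(t)\in(0,1))=1$ and $F$ maps $(0,1)$ onto $(0,\pi)$. Assumption \ref{as:A} (one-sided Lipschitz with $K=0$) follows from the computed $f'(x)$: both bracketed coefficients $(a-\gamma^2/4)$ and $(b-a-\gamma^2/4)$ are strictly positive under \eqref{wf_1}, and the squared cotangent and tangent terms are nonnegative, so $f'(x)\le 0$ on $(0,\pi)$; the mean value theorem then yields \eqref{one_sided_lip} with $K=0$, which also means Lemma \ref{lem:Positivity} imposes no restriction on $\D t$.

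\textbf{The moment condition --- the main obstacle.} The substantive step is Assumption \ref{as:MA}, which requires uniform $L^p$-control of $f'(x(t))$ and of $(f'f)(x(t))+\tfrac{\gamma^2}{2}f''(x(t))$. Rewriting the trigonometric expressions through $\cot(\arcsin\sqrt{y})=\sqrt{(1-y)/y}$ and $\tan(\arcsin\sqrt{y})=\sqrt{y/(1-y)}$ shows that these quantities are bounded by $c(1+|y(t)|^{-1}+|1-y(t)|^{-1})$ and $c(1+|y(t)|^{-3/2}+|1-y(t)|^{-3/2})$ respectively. Thus everything hinges on uniform negative moments of the Wright-Fisher diffusion near \emph{both} boundaries $0$ and $1$. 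This is where the genuine work lies: I would invoke Theorem 4.1 in \cite{hurd2008} for the transition density (a Jacobi-polynomial series expansion) and establish that $\sup_{t\in[0,T]}\E|y(t)|^{q_1}<\infty$ for $q_1>-2a/\gamma^2$ and $\sup_{t\in[0,T]}\E|1-y(t)|^{q_2}<\infty$ for $q_2>-2(b-a)/\gamma^2$. The delicate part is showing uniform-in-$t$ convergence of these series, which requires asymptotic bounds on the Jacobi polynomials so that the $t$-dependent exponential weights do not spoil integrability as $t\to 0$. Combining the two boundary estimates, Assumption \ref{as:MA} holds precisely when $\tfrac{4}{3\gamma^2}\min\{a,b-a\}>p$.

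\textbf{Assembling the conclusion.} With all three assumptions in force for $p<\tfrac{4}{3\gamma^2}\min\{a,b-a\}$, Theorem \ref{th:mainTH} immediately delivers
\[
\E\Bigl[\sup_{k=0,\ldots,\lceil T/\D t\rceil}|x(t_k)-X_k|^p\Bigr]\le C_p\cdot\D t^p.
\]
To transfer this to the original process I would use $y(t_k)=\sin^2(x(t_k)/2)$ together with $Y_k=\sin^2(X_k/2)$. Since $u\mapsto\sin^2(u/2)$ is globally Lipschitz (its derivative $\tfrac12\sin(u/2)\cos(u/2)$ is bounded by $\tfrac12$), we get $|y(t_k)-Y_k|\le\tfrac12|x(t_k)-X_k|$ pointwise, so the rate-one bound transfers with no need for negative moments of the numerical scheme or any H\"older splitting --- a genuine simplification compared to the CIR and Heston cases, where the back-transformation $x\mapsto x^2$ or $x\mapsto x^{-2}$ forced the extra moment bounds of Lemmas \ref{lem_sup_moments} and \ref{bound_inv}. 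Taking the supremum over $k$ and the expectation then yields the claimed estimate for $2\le p<\tfrac{4}{3\gamma^2}\min\{a,b-a\}$, completing the proof.
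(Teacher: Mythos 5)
Your proposal is correct and follows essentially the same route as the paper: transform with $F(y)=2\arcsin(\sqrt{y})$, check the one-sided Lipschitz condition with $K=0$ via the sign of $f'$, verify Assumption \ref{as:MA} by bounding $|f'|$ and $|(f'f)+\tfrac{\gamma^2}{2}f''|$ by negative powers of $y(t)$ and $1-y(t)$ and invoking the Jacobi-polynomial moment bounds from \cite{hurd2008}, then apply Theorem \ref{th:mainTH} and transform back using the global Lipschitz property of $u\mapsto\sin^2(u/2)$ (the paper phrases this tersely as boundedness of the $\sin$-function, but the substance is identical). Your explicit observation that the back-transformation requires no moment bounds on the scheme, unlike the CIR and Heston cases, is a correct and welcome clarification rather than a deviation.
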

\smallskip

\subsection{Ait-Sahalia model}

Higham et al.~analysed in \cite{szpruch-strongly} a backward Euler scheme for the Ait-Sahalia interest rate model 
\begin{equation}   \label{ait}
dy(t)=(\alpha _{-1}y(t)^{-1}-\alpha _{0}+\alpha
_{1}y(t)-\alpha_{2}y(t)^{r})dt+ \sigma y(t)^{\rho }dw(t),
\end{equation}
where $\alpha _{-1}, \alpha _{0}, \alpha, \alpha_{1}, \alpha_{2},  \sigma $ are positive
constants and $\rho,r > 1$. In \cite{szpruch-strongly} it was established that 
$$ \mathbb{P}(y(t) \in (0,\8), \,\, t \geq 0 ) =1.$$ 
Under the assumption $r+1>2\rho$ Higham et al.~ proved uniform  $L^{p}$-convergence for any $p \geq 2$ 
the backward Euler method  directly
applied to \eqref{ait}. However, 
their results did not reveal a rate of  convergence. Here, using the Lamperti transformation  approach we construct a scheme that strongly converges with  rate one. We focus 
on the critical case with $r=2$ and $\rho=1.5$ which was not covered in \cite{szpruch-strongly}. 
Using $F(y)= y^{-1/2}$
we obtain 
$$ dx(t) = f(x)dt 
   - \frac{1}{2}\sigma dw(t).  $$
with
$$ f(x) = \left(\frac{1}{2}\a_{2} + \frac{3}{8} \s^{2} \right) x(t)^{-1} - \frac{1}{2}\a_{1}x(t) + \frac{1}{2} \a_{0} x(t)^3 - \frac{1}{2}\a_{-1} x(t)^{5}, \quad  x>0.  $$
We have for 
$$ f'(x) =  -\left(\frac{1}{2}\a_{2} + \frac{3}{8} \s^{2} \right) x(t)^{-2} - \frac{1}{2}\a_{1} + \frac{3}{2} \a_{0} x(t)^2 
- \frac{5}{2}\a_{-1} x(t)^{4},  $$
that
$$ \lim_{x \rightarrow 0} f'(x)= - \infty, \qquad \lim_{x \rightarrow \infty } f'(x) = - \8$$
and hence there exists  a $c>0$ such that 
$$ \sup_{x >0}f'(x) < c.$$
Now the mean value theorem implies
\[
 (x-y) (f(x) - f(y)) \le c \lev x- y\rev^{2}, \qquad x,y>0,
\] i.e. the drift coefficient is one-sided Lipschitz.
We have moreover that
$$ f''(x) =  \left(\a_{2} + \frac{3}{4} \s^{2} \right) x(t)^{-3}  + 3 \a_{0} x(t) - 10\,\a_{-1} x(t)^{3},  $$
and 
$$  \left| (f'f)(x(t))+ \frac{\s^2}{8}f''(x(t)) \right| \leq c \left( 1+ |x(t)|^{-3}  + |x(t)|^{20} \right).$$
Straightforward computations also give that
$$ f(x) >  g(x), \qquad x \in (0, 2x(0)] $$
where
$$ g(x):=\left(\frac{1}{2}\a_{2} + \frac{3}{8} \s^{2} \right) x^{-1} -  \frac{1}{2} \alpha_2 \beta x $$
with
$$ \beta= \frac{\alpha_1}{\alpha_2} +  16\frac{\alpha_{-1}}{\alpha_2} x(0)^4. $$
 Now a comparison result for SDEs, see e.g. Proposition V.2.18  and Exercise V.2.19  in \cite{karatzas1991brownian},
yields that almost surely
$$ x(t)  \geq x^{(1)}(t) , \qquad t \in [0, \tau^{(1)}),$$
where
$$  d x^{(1)}_n(t)=g(x^{(1)}(t)) dt  - \frac{1}{2}\sigma dw(t), \quad t \geq 0,\qquad x^{(1)}(0)=x(0)$$
and
$$ \tau^{(1)}=  \inf \{ t \in [0,T]: x(t) > 2x(0) \} .$$
Let us define a sequence of stopping times
$$ \tau^{(2i)}=  \inf \{ t \in [\tau^{(2i-1)},T]: x(t) \le 2x(0) \}, \quad i=1,2,\ldots, $$
$$ \tau^{(2i+1)}=  \inf \{ t \in [\tau^{(2i)},T]: x(t) > 2x(0) \}, \quad i=1,2,\ldots $$
and an associated sequence of SDEs
$$  d x^{(i)}(t)=g(x^{(i)}(t)) dt  - \frac{1}{2}\sigma dw(t), \quad t \geq \tau^{(2i)} ,
\qquad x^{(i)}(\tau^{(2i)})=x(\tau^{(2i)}).$$
Using the comparison result for SDEs again, we have that almost surely
$$ x(t)  \geq x^{(i)}(t) , \qquad t \in [\tau^{(2i)}, \tau^{(2i+1)}).$$
Therefore, using the CIR process as a lower bound, Assumption \ref{as:MA} and consequently Theorem \ref{th:mainTH} holds for $ p< \frac{4}{3} \left( 1 + \frac{\alpha_2}{\sigma^2}\right) $.
Proceeding as for the $3/2$-model we have:

\begin{prop}
Let $1 \leq p < \frac{1}{3} + \frac{1}{3}\frac{\a_{2}}{\s^2}$. The LBE approximation of the Ait-Sahalia process with $r=2$ and $\rho=1.5$ is $p$-strongly convergent
with order one.
\end{prop}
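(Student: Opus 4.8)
The plan is to follow exactly the template established for the Heston $3/2$-volatility process, since the transformed Ait-Sahalia equation has the same structure $f(x) = c_1 x^{-1} + h(x)$ with $c_1 = \frac12\a_2 + \frac38\s^2$ and a polynomial perturbation $h$, and the back-transformation is again $y = x^{-2}$. First I would invoke the convergence estimate already obtained from Theorem \ref{th:mainTH}, namely that for $p < \frac43(1 + \frac{\a_2}{\s^2})$ we have $\E\sup_{k}|x(t_k)-X_k|^p \le C_p \D t^p$, which holds because the CIR lower bound controls the inverse moments appearing in Assumption \ref{as:MA}. The key observation making the back-transformation tractable is the pointwise bound
\begin{equation*}
\left| \frac{1}{X_k^2} - \frac{1}{x(t_k)^2} \right| \le c\left( |X_k|^{-3} + |x(t_k)|^{-3} \right) |X_k - x(t_k)|,
\end{equation*}
exactly as in the $3/2$ case, which decouples the error into an inverse-moment factor and the convergence factor.

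Next I would apply H\"older's inequality with conjugate exponents $4/3$ and $4$ to split the $p$-th moment of this difference into the product of $(\E\sup_k(|X_k|^{-4p}+|x(t_k)|^{-4p}))^{3/4}$ and $(\E\sup_k|X_k-x(t_k)|^{4p})^{1/4}$. The second factor is bounded by $C\D t^p$ provided $4p$ stays within the range where Theorem \ref{th:mainTH} applies, i.e. $4p < \frac43(1+\frac{\a_2}{\s^2})$, which gives the constraint $p < \frac13 + \frac13\frac{\a_2}{\s^2}$. For the first factor I would control the uniform inverse moments of the scheme via Lemma \ref{bound_inv} (with $m_1=1$), reducing them to uniform inverse moments $\sup_t\E|x(t)|^{-(4p+2)}$ of the SDE, and then control those through Lemma \ref{lem:INV_SDE} together with the CIR comparison bound, which guarantees finiteness precisely when $4p+2 < 4(\frac12+\frac{\a_2}{\s^2})$, i.e. again $p < \frac13 + \frac13\frac{\a_2}{\s^2}$.

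The main obstacle, and the step deserving the most care, is justifying the uniform-in-time inverse moment bounds for $x(t)$. Unlike the exact CIR case where inverse moments are known explicitly, here $x(t)$ solves the transformed Ait-Sahalia SDE, whose drift contains the additional negative superlinear terms $-\frac12\a_1 x - \frac12\a_{-1}x^5$ (after including $+\frac12\a_0 x^3$). These push the process away from the origin no less strongly than the pure CIR drift near zero, so the stochastic comparison argument already set up in the excerpt—bounding $x(t)$ from below by the CIR-type diffusion $x^{(i)}$ with drift $g(x)=c_1 x^{-1} - \frac12\a_2\beta x$ on the random intervals where $x(t)\le 2x(0)$—is what transfers the CIR inverse-moment bounds to $x(t)$. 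I would verify that the comparison yields $\E\sup_t|x(t)|^{-q} < \infty$ for $q < 4(\frac12+\frac{\a_2}{\s^2})$ by patching the estimates across the stopping times $\tau^{(i)}$ and using that on the complementary intervals $x(t)>2x(0)$ is bounded away from zero, so no inverse-moment blow-up occurs there. Once this is in place, collecting the two H\"older factors and taking the supremum outside yields the claimed order-one $p$-strong convergence for $1\le p < \frac13 + \frac13\frac{\a_2}{\s^2}$.
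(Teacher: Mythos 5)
Your proposal follows essentially the paper's own route: the paper proves this proposition literally by ``proceeding as for the $3/2$-model'', i.e.\ Theorem \ref{th:mainTH} applied with exponent $4p$ (with Assumption \ref{as:MA} verified through the CIR comparison via the stopping times $\tau^{(i)}$), the pointwise bound $|X_k^{-2}-x(t_k)^{-2}|\le c\,(|X_k|^{-3}+|x(t_k)|^{-3})|X_k-x(t_k)|$, H\"older with exponents $4/3$ and $4$, and Lemmas \ref{bound_inv} and \ref{lem:INV_SDE} for the uniform inverse moments of the scheme; your extra remarks on patching the comparison across the $\tau^{(i)}$ and on the intervals where $x(t)>2x(0)$ just make explicit what the paper leaves implicit. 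One arithmetic slip should be corrected: the dominating CIR-type process in the comparison has $\kappa\theta/\sigma^2 = 1+\alpha_2/\sigma^2$ (since $\kappa\theta_v=\alpha_2+\tfrac34\sigma^2$ and $\kappa\theta=\kappa\theta_v+\tfrac{\sigma^2}{4}$), so its transformed inverse moments satisfy $\sup_{t}\E|x(t)|^{-q}<\infty$ for $q<4\bigl(1+\alpha_2/\sigma^2\bigr)$, not $4\bigl(\tfrac12+\alpha_2/\sigma^2\bigr)$. Hence the inverse-moment requirement $4p+2<4\bigl(1+\alpha_2/\sigma^2\bigr)$ reads $p<\tfrac12+\alpha_2/\sigma^2$; your displayed condition $4p+2<4\bigl(\tfrac12+\alpha_2/\sigma^2\bigr)$ is not equivalent to $p<\tfrac13+\tfrac13\alpha_2/\sigma^2$ as you claim (it actually gives $p<\alpha_2/\sigma^2$), and if taken at face value it would shrink the admissible range below the proposition's whenever $\alpha_2/\sigma^2<\tfrac12$. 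With the corrected threshold the inverse-moment condition never binds, and the stated range $p<\tfrac13+\tfrac13\alpha_2/\sigma^2$ comes, exactly as you say, solely from applying Theorem \ref{th:mainTH} with exponent $4p$.
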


\smallskip

In the case $r+1>2\rho$ we know from \cite{szpruch-strongly} that 
$$ \sup_{t\in[0,T]} \E \lev x(t) \rev^{-p} < \infty \qquad \text{for all} \qquad   p\geq 1.       $$
Moreover, the drift coefficient of the transformed SDE  behaves at zero like the one of a transformed CEV process.
A by now standard analysis gives:

\smallskip

\begin{prop}
Let $p \geq 1$. The LBE approximation of the Ait-Sahalia process with $r+1>2\rho$ is $p$-strongly convergent
with order one.
\end{prop}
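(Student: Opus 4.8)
The plan is to follow the same three-stage strategy used for the CEV and Heston-$3/2$ processes: perform the Lamperti transformation, verify Assumptions \ref{as:0}, \ref{as:A} and \ref{as:MA} so that Theorem \ref{th:mainTH} applies to the transformed process $x(t)=F(y(t))$, and then transform back with the help of the inverse-moment bounds of Lemmas \ref{bound_inv} and \ref{lem:INV_SDE}.

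First I would apply the Lamperti transformation $F(y)\propto y^{-(\rho-1)}$ (suitably normalized, passing to $-w$ if necessary), which for $\rho>1$ is a decreasing diffeomorphism of $(0,\infty)$ onto $(0,\infty)$ producing a constant diffusion coefficient; it specializes to $F(y)=y^{-1/2}$ in the critical case $\rho=3/2$. With $(\alpha,\beta)=(0,\infty)$, Assumption \ref{as:0} is inherited from $\mathbb{P}(y(t)\in(0,\infty),\,t\ge0)=1$, which holds by \cite{szpruch-strongly}. The transformed drift $f$ is a finite combination of powers $x^{j}$ and $x^{-j}$; near $x=0$ (i.e.\ $y\to\infty$) it reproduces the structure of a transformed CEV drift, so in particular $\lim_{x\to0}f'(x)=-\infty$, while the $\alpha_{-1}y^{-1}$ term forces $f'(x)\to-\infty$ as $x\to\infty$. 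Hence $\sup_{x>0}f'(x)\le c$, and the mean value theorem delivers the one-sided Lipschitz property of Assumption \ref{as:A} with $K=c$.

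Next I would verify Assumption \ref{as:MA}. Since $f'(x(t))$ and $(f'f)(x(t))+\frac{\sigma^{2}}{2}f''(x(t))$ are again finite sums of positive and negative powers of $x(t)$, it suffices to bound $\sup_{t\in[0,T]}\E|x(t)|^{q}$ for all $q\in\mathbb{R}$. Positive moments of $x$ correspond to inverse moments of $y$ (controlled by the repelling $\alpha_{-1}y^{-1}$ drift), and inverse moments of $x$ correspond to positive moments of $y$; under $r+1>2\rho$ all of these $y$-moments are finite uniformly on $[0,T]$ by \cite{szpruch-strongly} (the stated bound $\sup_{t}\E|x(t)|^{-p}<\infty$ being the instance $q=-p$). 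Thus Assumption \ref{as:MA} holds for every $p\ge1$, and Theorem \ref{th:mainTH} yields $\E[\sup_{k}|x(t_k)-X_k|^{p}]\le C_p\,\Delta t^{p}$ for all $p$.

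Finally, for the back transformation $Y_k=F^{-1}(X_k)$ with $F^{-1}(x)\propto x^{-1/(\rho-1)}$, the mean value theorem gives $|y(t_k)-Y_k|\le c\,(|x(t_k)|^{-\gamma}+|X_k|^{-\gamma})\,|x(t_k)-X_k|$ with $\gamma=\frac{1}{\rho-1}+1$, and I would then split by H\"older's inequality exactly as in the Heston-$3/2$ argument: the inverse-moment factors are controlled by Lemma \ref{lem:INV_SDE} for $x(t_k)$ and by Lemma \ref{bound_inv} for $X_k$ (whose hypothesis Assumption \ref{as:INV} holds because near zero $f$ splits as $c_1x^{-m_1}+h(x)$ with $h$ of polynomial growth), while the remaining factor is handled by the order-one rate just obtained. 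I expect the main obstacle to be the bookkeeping behind Assumption \ref{as:MA}: one must check that every power of $x$ occurring in $f'$, $f''$ and $f'f$ has a finite uniform moment of the required order, which is precisely where the interplay between the repelling $\alpha_{-1}y^{-1}$ drift and the condition $r+1>2\rho$ is essential. Once this is settled, the remaining estimates are routine.
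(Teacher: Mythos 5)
Your overall route is exactly the paper's (which disposes of this case in two lines): Lamperti transform $F(y)\propto y^{-(\rho-1)}$, verification of Assumptions \ref{as:0}, \ref{as:A} and \ref{as:MA} via the moment bounds of \cite{szpruch-strongly}, Theorem \ref{th:mainTH} for the transformed process, then back transformation by the mean value theorem and H\"older's inequality using inverse moments of the scheme. The transformation, the one-sided Lipschitz check and the moment bookkeeping for Assumption \ref{as:MA} are all correct. There is, however, one step that fails as written: your claim that Assumption \ref{as:INV} holds, which is what you need to invoke Lemma \ref{bound_inv} (for $X_k$) and Lemma \ref{lem:INV_SDE} (for $x(t)$).

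With $F(y)=y^{-(\rho-1)}$ (and the reflected Brownian motion) the transformed drift is, up to the normalization of $F$,
\begin{equation*}
f(x)=(\rho-1)\left( \alpha_2\, x^{-\frac{r-\rho}{\rho-1}} + \frac{\rho\sigma^2}{2}\, x^{-1} + \alpha_0\, x^{\frac{\rho}{\rho-1}} - \alpha_1\, x - \alpha_{-1}\, x^{\frac{\rho+1}{\rho-1}} \right),
\end{equation*}
so near $x=0$ there are \emph{two} singular terms: one with exponent $m_1=\frac{r-\rho}{\rho-1}$ coming from $-\alpha_2 y^{r}$, and one with exponent $1$ coming from the It\^{o} correction $-\frac{1}{2}b'$. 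Under the strict inequality $r+1>2\rho$ we have $m_1>1$, so the exponents are distinct, and no splitting $f(x)=c_1x^{-m_1}+h(x)$ with $|h(x)|\le c_2(1+|x|^{m_2})$ on all of $(0,\infty)$ exists: whichever singular term you assign to $h$ blows up at $0$. (In the critical case $r=2$, $\rho=3/2$ treated just before, one has $m_1=1$ and the two terms merge into $(\frac{1}{2}\alpha_2+\frac{3}{8}\sigma^2)x^{-1}$, so Assumption \ref{as:INV} does hold there; you appear to have extrapolated from that case.) The gap is easy to repair precisely because both singular coefficients are positive: set $g(x)=(\rho-1)\bigl(\alpha_2 x^{-m_1}+\frac{\rho\sigma^2}{2}x^{-1}\bigr)$ and $h=f-g$, and rerun the proofs of Lemmas \ref{bound_inv} and \ref{lem:INV_SDE} with $g$ in place of $c_1x^{-m_1}$: the identities \eqref{inv_trick_1}--\eqref{inv_trick_2} hold verbatim for this decomposition, $h$ is now genuinely of polynomial growth, and since $g(x)\ge c\, x^{-m_1}$ for $x>0$, moment bounds for $g(X_k)$ deliver the required inverse moments of $X_k$; in Lemma \ref{lem:INV_SDE} the extra $x^{-1}$ term only contributes another negative term $-c\,p|x|^{-(p+2)}$ to the drift of $|x|^{-p}$, so that proof goes through as well. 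With this modification your argument closes and coincides with the "by now standard analysis" the paper alludes to.
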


\smallskip

\smallskip

\section{A Milstein-type scheme for CIR}

In this section we establish a connection between the Lamperti-backward Euler and a drift-implicit Milstein scheme for the CIR process.
We will show that the order of convergence of the LBE  carries over to a drift-implicit Milstein scheme, which
has been proposed in \cite{kahl2008structure} and \cite{higham2012convergence}. While strong convergence was shown in
\cite{compfin2012}, sharp convergence rates have not been established so far. 

Recall that BEM for the transformed CIR process reads as
\begin{align*} 
X_{k+1}  =  X_k + f(X_{k+1})\Delta t +\frac{1}{2}\s \Delta w_{k+1}, \qquad k=0,1, \ldots
\end{align*}
with
$$f(x)=\frac{1}{2}\k \left( \Big(\o - \frac{\s^{2}}{4\k} \Big) x^{-1} - x  \right), \qquad x>0.$$
Squaring yields the LBE, i.e.
\begin{align*} 
X_{k+1}^{2}  =   X_{k}^{2}  &+    \k( \o - X_{k+1}^{2}) \D t  +     \s X_{k}  \D w_{k+1} \\ & +
\frac{\s^2}{4} \big( (\D w_{k+1})^2 - \D t \big)  - \left( f(X_{k+1}) \right)^2\D t^2 .
\end{align*} 
On the other hand the drift-implicit Milstein scheme for CIR  is given by 
\begin{align} \label{eq:CIRMILST} 
Z_{k+1} =  Z_{k}  +    \k( \o - Z_{k+1}) \D t  +     \s \sqrt{Z_{k}}  \D w_{k+1} +
\frac{\s^2}{4} \big( (\D w_{k+1})^2 - \D t \big),
\end{align}  hence both schemes coincide up to a term of order $\D t^2$.
The numerical flows of the LBE  and Milstein scheme are given by
\begin{align*}
 \phi_{E}(x,k) &= \frac{1}{1+ \kappa \Delta t} \left(x  +    \k \o  \D t  +     \s \sqrt{x}  \D w_{k+1} +
\frac{\s^2}{4} \big( (\D w_{k+1})^2 - \D t \big) \right) \\ &  \qquad - \frac{1}{1+ \kappa \Delta t} \left( f(\phi_{E}(x,k) \right)^2\D t^2 
\end{align*}
and
\[
 \phi_{M}(x,k) = \frac{1}{1+ \kappa \Delta t} \left(x  +    \k \o  \D t  +     \s \sqrt{x}  \D w_{k+1} +
\frac{\s^2}{4} \big( (\D w_{k+1})^2 - \D t \big) \right) .
\]
It is clear then that
\[
 \phi_{M}(x,k) \ge  \phi_{E}(x,k)
\] for all $x>0$, $k =0,1, \ldots $.
From \cite{alfonsi2005discretization} we know on the other hand 
\[
\phi_{E}(x,k) \ge \phi_{E}(y,k) \quad \text{for} \quad  x\ge y. 
\]
Hence we conclude 
\begin{equation} \label{eq:pos}
 Z_{k} \ge X^{2}_{k}, \quad k=0,1,\ldots,
\end{equation}
so the drift-implicit Milstein scheme dominates the Lamperti-Euler method and thus  preserves positivity.

\smallskip

To  establish  the order of $L^1$-convergence for the drift-implicit Milstein scheme it is enough to 
control the difference between  the Lamperti-Euler method and  \eqref{eq:CIRMILST}.   

\smallskip
\begin{lemma}
 Let $\frac{ \kappa \theta}{\sigma^2} > 3/2$. Then there exists a 
constant $c>0$ such that
\begin{equation}  \label{eq:bem_milst}
 \sup_{k=0, \ldots ,\lceil T / \Delta t \rceil} \E    \lev Z_{k} -  X_{k}^2\rev  \le c \D t .
\end{equation}
\end{lemma}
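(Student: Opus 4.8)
The plan is to exploit the almost-identity between the two schemes recorded earlier in this section: the squared Lamperti--backward Euler scheme and the drift-implicit Milstein scheme \eqref{eq:CIRMILST} differ only in the term $(f(X_{k+1}))^2\D t^2$, while \eqref{eq:pos} supplies the sign information $Z_k\ge X_k^2$. Setting $D_k:=Z_k-X_k^2\ge 0$ and subtracting the squared LBE recursion from \eqref{eq:CIRMILST}, the common increment $\fracs{\s^2}{4}\big((\D w_{k+1})^2-\D t\big)$ cancels; using that the BEM iterates are positive, so $\sqrt{X_k^2}=X_k$, one obtains the scalar linear recursion
\begin{equation*}
(1+\k\D t)\,D_{k+1}=D_k+\s\big(\sqrt{Z_k}-X_k\big)\D w_{k+1}+(f(X_{k+1}))^2\D t^2,\qquad D_0=0.
\end{equation*}

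First I would take expectations. Since $0\le\sqrt{Z_k}-X_k\le\sqrt{D_k}$ (from $Z_k\ge X_k^2$), an induction shows every $D_k$ is integrable, and as $\sqrt{Z_k}-X_k$ is $\F_{k\D t}$-measurable while $\D w_{k+1}$ is independent of $\F_{k\D t}$ with mean zero, the middle term is a martingale increment that vanishes in expectation. Because moreover $D_k\ge 0$, this reduces the estimate to the deterministic recursion
\begin{equation*}
\E D_{k+1}=\frac{1}{1+\k\D t}\,\E D_k+\frac{\D t^2}{1+\k\D t}\,\E\big[(f(X_{k+1}))^2\big],
\end{equation*}
so everything hinges on a bound for $\E[(f(X_{k+1}))^2]$ that is uniform in $k$ and $\D t$. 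With $f(x)=\fracs{1}{2}\k(\o_v x^{-1}-x)$ and $\o_v=\o-\fracs{\s^2}{4\k}$ one has $(f(x))^2\le c(x^{-2}+x^2)$, so it suffices to control $\sup_k\E[X_k^2]$ and $\sup_k\E[X_k^{-2}]$.

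The positive moment is immediate from Lemma \ref{lem_sup_moments}. The inverse moment is the crux, and it is exactly here that the hypothesis $\fracs{\k\o}{\s^2}>3/2$ is forced. Since Assumption \ref{as:INV} holds for the transformed CIR drift with $m_1=1$ and $c_1=\fracs{1}{2}\k\o_v>0$, the first assertion of Lemma \ref{bound_inv} applied with $p=2$ gives $\sup_k\E[X_k^{-2}]\le c\,(1+\sup_t\E[x(t)^{-2}])=c\,(1+\sup_t\E[y(t)^{-1}])$, which is finite by \eqref{cir_trans_inv} whenever $\fracs{\k\o}{\s^2}>\fracs{1}{2}$. The binding constraint, however, is that invoking Lemma \ref{bound_inv} at $p=2$ requires the hypotheses of Theorem \ref{th:mainTH} at that exponent, and for the transformed CIR process this demands $2<\fracs{4}{3}\fracs{\k\o}{\s^2}$, i.e.\ $\fracs{\k\o}{\s^2}>3/2$. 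Once $\sup_k\E[(f(X_{k+1}))^2]\le c$ is established, the recursion reads $\E D_{k+1}\le(1+\k\D t)^{-1}(\E D_k+c\D t^2)$, and since $D_0=0$ a geometric summation yields $\E D_n\le c\D t^2\sum_{j\ge 1}(1+\k\D t)^{-j}=\fracs{c}{\k}\,\D t$ uniformly in $n\le\lceil T/\D t\rceil$, which is \eqref{eq:bem_milst}. I expect the main obstacle to be precisely this inverse-moment control, i.e.\ justifying the uniform bound on $\E[X_k^{-2}]$, as that is the only place where the parameter restriction genuinely enters.
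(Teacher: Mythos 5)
Your proposal is correct and follows essentially the same route as the paper's own proof: the same difference recursion $(1+\k\D t)(Z_{k+1}-X_{k+1}^2)=(Z_k-X_k^2)+\s(\sqrt{Z_k}-X_k)\D w_{k+1}+(f(X_{k+1}))^2\D t^2$, elimination of the martingale term via independence of $(X_k,Z_k)$ and $\D w_{k+1}$, and reduction to the uniform bound on $\E[(f(X_{k+1}))^2]$, obtained from Lemma \ref{lem_sup_moments} for the positive moments and Lemma \ref{bound_inv} (at $p=2$, which is exactly where $\k\o/\s^2>3/2$ enters) for the inverse moments. Your two refinements --- the explicit integrability induction via $0\le\sqrt{Z_k}-X_k\le\sqrt{Z_k-X_k^2}$, and the geometric summation giving a $T$-uniform constant $c/\k$ --- are minor sharpenings of the same argument.
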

\begin{proof}
 Let  $e_k = Z_k - X_k^2$ and note that $e_k \geq 0$ by \eqref{eq:pos}. We have
$$ e_{k+1} =e_k   - \kappa e_{k+1} \Delta t  + \s (\sqrt{Z_{k}} -  X_k ) \D w_{k+1} + (f(X_{k+1}))^{2} \Delta t^2. $$ 
Exploiting  the independence of
$X_{k}, Z_{k}$ and $\D w_{k+1}$ it follows
$$ \E e_{k+1} = \frac{1}{1+ \kappa \Delta t} \left( \E e_k  + \E (f(X_{k+1}))^{2} \Delta t^2 \right)$$
and consequently
$$   \sup_{k=0, \ldots ,\lceil T / \Delta t \rceil}  \E e_{k} \leq \D t \sum_{k=0}^{\lceil T / \Delta t \rceil -1}
\E (f(X_{k+1}))^2 \D t.$$
Due to our assumptions Lemma \ref{bound_inv} gives that
$$  \sup_{k=1, \ldots, \lceil T / \D t \rceil} \E X_k^{-2} \leq c,     $$
which together with 
$$ |f(x)| \leq c \cdot \left(1 + |x|+ |x|^{-1} \right)$$
and Lemma \ref{lem_sup_moments} shows the  assertion.

\end{proof}

\medskip

Using this result we have:
\begin{prop} \label{est_dimp}
(i) Let $ \frac{\kappa \theta}{\sigma^2} > 3/2$. Then,  there exists a constant $C >0$   such that
\begin{align}  \sup_{k=0, \ldots, \lceil T / \D t \rceil}\E   |y(k \D t) - Z_k | \leq C \cdot \D t. \label{dimp_est_L1} \end{align}
(ii)  Let $ \frac{\kappa \theta}{\sigma^2} > 3/2$. Then,  there exists a constant $C >0$   such that
\begin{align} \E   \sup_{k=0, \ldots, \lceil T / \D t \rceil}|y(k \D t) - Z_k |^2 \leq C \cdot \D t. \label{dimp_est_L2}\end{align}
\end{prop}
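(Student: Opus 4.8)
The plan is to prove both parts via the triangle inequality, splitting the global error into the Lamperti--backward Euler (LBE) error and the Milstein--LBE difference. Writing $t_k=k\D t$, $N:=\lceil T/\D t\rceil$ and $e_k=Z_k-X_k^2\ge 0$ (by \eqref{eq:pos}), I would use the bound
\begin{align*}
|y(t_k)-Z_k|\le |y(t_k)-X_k^2|+e_k .
\end{align*}
The first summand is controlled by Proposition \ref{subsec_cir}: since $\kappa\theta/\sigma^2>3/2$ we may take $p=2$ there (because $2<\tfrac43\kappa\theta/\sigma^2$), which gives the order-one $L^2$-bound $\E\sup_k|y(t_k)-X_k^2|^2\le C_2\,\D t^2$. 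Thus the whole difficulty is shifted onto the term $e_k$.

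For part (i) this is immediate. By Cauchy--Schwarz and the above, $\sup_k\E|y(t_k)-X_k^2|\le(\sup_k\E|y(t_k)-X_k^2|^2)^{1/2}\le\sqrt{C_2}\,\D t$, while $\sup_k\E\,e_k\le c\,\D t$ is exactly \eqref{eq:bem_milst}. Adding the two bounds yields \eqref{dimp_est_L1}.

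For part (ii) the issue is that I now need the supremum \emph{inside} the expectation and in $L^2$, whereas \eqref{eq:bem_milst} only supplies the weaker sup-outside $L^1$-bound. To upgrade it I would reopen the one-step recursion from the proof of \eqref{eq:bem_milst}, namely $(1+\kappa\D t)e_{k+1}=e_k+\sigma(\sqrt{Z_k}-X_k)\D w_{k+1}+(f(X_{k+1}))^2\D t^2$, set $\gamma=(1+\kappa\D t)^{-1}$ and solve it explicitly (using $e_0=0$) as $e_k=A_k+B_k$ with a martingale term $A_k=\sigma\sum_{j=0}^{k-1}\gamma^{\,k-j}(\sqrt{Z_j}-X_j)\D w_{j+1}$ and a nonnegative remainder $B_k=\D t^2\sum_{j=0}^{k-1}\gamma^{\,k-j}(f(X_{j+1}))^2$. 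Since $0<\gamma<1$ and $N\simeq T/\D t$, the weights satisfy $\gamma^{\,k}\le 1$ and $\gamma^{-j}\le(1+\kappa\D t)^N\le c$ uniformly in $\D t$, so they are harmless. The key elementary estimate is
\begin{align*}
(\sqrt{Z_k}-X_k)^2=\frac{e_k^2}{(\sqrt{Z_k}+X_k)^2}\le\frac{e_k^2}{Z_k}\le e_k ,
\end{align*}
using $Z_k=X_k^2+e_k\ge e_k$. Writing $A_k=\sigma\gamma^k\tilde M_k$ with the genuine $\mathcal F_{k\D t}$-martingale $\tilde M_k=\sum_{j=0}^{k-1}\gamma^{-j}(\sqrt{Z_j}-X_j)\D w_{j+1}$, Doob's inequality together with the above estimate and \eqref{eq:bem_milst} gives $\E\sup_k A_k^2\le c\,\E\tilde M_N^2\le c\,\D t\sum_{j=0}^{N-1}\E\,e_j\le c\,\D t$. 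For the remainder, $\sup_k B_k\le\D t^2\sum_{j=0}^{N-1}(f(X_{j+1}))^2$, so Cauchy--Schwarz in $j$ together with $|f(x)|\le c(1+|x|+|x|^{-1})$ and the fourth-order (inverse) moment bounds from Lemma \ref{lem_sup_moments} and Lemma \ref{bound_inv} (finite under $\kappa\theta/\sigma^2>3/2$) yield $\E\sup_k B_k^2\le c\,\D t^2$. Combining gives $\E\sup_k e_k^2\le c\,\D t$, and adding the LBE contribution $\E\sup_k|y(t_k)-X_k^2|^2\le C_2\D t^2$ proves \eqref{dimp_est_L2}.

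The main obstacle is this last step: converting the sup-outside $L^1$-bound \eqref{eq:bem_milst} into a sup-inside $L^2$-bound on $e_k$. This is what forces one to reopen the recursion rather than use \eqref{eq:bem_milst} as a black box, and the estimate $(\sqrt{Z_k}-X_k)^2\le e_k$ is precisely what makes the martingale's quadratic variation controllable by $\sum_j\E\,e_j$. Note also that the resulting rate in \eqref{dimp_est_L2} is only $\D t$ (i.e.\ $L^2$-order $1/2$), since the martingale contribution cannot do better than the square root of the accumulated $L^1$-error.
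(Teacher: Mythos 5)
Part (i) of your proposal coincides with the paper's proof (triangle inequality, Proposition \ref{subsec_cir} with $p=2$, and the bound \eqref{eq:bem_milst}), and the martingale half of your part (ii) is sound: the estimate $(\sqrt{Z_k}-X_k)^2\le e_k$ combined with \eqref{eq:bem_milst} and Doob's inequality is precisely the mechanism that makes an $L^1$-rate upgrade to a sup-inside $L^2$-rate of order $1/2$. The genuine gap is in your treatment of the remainder $B_k=\D t^2\sum_{j}\gamma^{k-j}\left(f(X_{j+1})\right)^2$. Any bound on $\E\sup_k B_k^2$ forces you to control $\sup_j\E|f(X_j)|^4$, hence $\sup_j\E X_j^{-4}$; you assert these "fourth-order (inverse) moment bounds from Lemma \ref{lem_sup_moments} and Lemma \ref{bound_inv}" are "finite under $\kappa\theta/\sigma^2>3/2$", but Lemma \ref{bound_inv} (with $m_1=1$, so $p=4$) delivers $\sup_k\E X_k^{-4}<\infty$ only under the assumptions of Theorem \ref{th:mainTH} for $p=4$. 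For the transformed CIR process, Assumption \ref{as:MA} with $p=4$ requires $\sup_{t\in[0,T]}\E\, y(t)^{-6}<\infty$, which by \eqref{cir_trans_inv} needs $\kappa\theta/\sigma^2>3$ --- strictly stronger than the hypothesis $\kappa\theta/\sigma^2>3/2$ of the proposition. Under $\kappa\theta/\sigma^2>3/2$ the available exponent in Theorem \ref{th:mainTH} and Lemma \ref{bound_inv} is only (slightly above) $2$, and no H\"older or interpolation trick closes the gap, since squaring $\sum_j(f(X_{j+1}))^2\D t^2$ unavoidably produces fourth moments of $f(X_j)$.

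The paper avoids this issue by never reopening the $e_k$-recursion in part (ii). Instead it compares $y(k\D t)$ and $Z_k$ directly, following Proposition 5.3 of \cite{compfin2012}: solving the implicit Milstein recursion for $y(k\D t)-Z_k$ produces a martingale term whose quadratic variation is controlled by $\sum_\ell (1+\kappa\D t)^{2\ell}\,\E|y(\ell\D t)-Z_\ell|\,\D t\le c\,\D t$ --- this is exactly where part (i) enters, playing the role that \eqref{eq:bem_milst} plays in your argument --- plus remainders $\rho_{\ell+1}$ built only from the exact solution (time increments of $y$ and of $\sqrt{y}$), whose second moments are finite under the stated parameter condition. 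The scheme-dependent quantity $\left(f(X_{k+1})\right)^2\D t^2$, the source of your fourth-moment requirement, never appears. As written, your proof of (ii) is valid only under the stronger hypothesis $\kappa\theta/\sigma^2>3$; to obtain the proposition as stated you would need to replace the $B_k$ estimate, e.g.\ by switching to the paper's direct comparison of $y$ and $Z$.
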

\begin{proof}

(i) This follows from the triangle inequality and Proposition \ref{subsec_cir}. 

(ii) Using \eqref{dimp_est_L1} the second assertion  can be shown along the lines of the proof of Proposition 5.3 in \cite{compfin2012},
where strong convergence of the drift-implicit Milstein scheme (without a convergence rate) was shown. 
Proceeding as in the proof of  Proposition 5.3 in \cite{compfin2012}
we have
\begin{align*} 
\mathbb{E} \sup_{k=0, \ldots,  \lceil T / \D t \rceil } |y(k \D t) -Z_k|^2   & \leq c  \sum_{\ell=0}^{ \lceil T / \D t \rceil-1}  (1+ \kappa \Delta t)^{2\ell} \mathbb{E} \left| y(\ell \D t)- Z_{\ell} \right| \Delta t  \\ & \qquad \quad \nonumber +   c  \, \mathbb{E} \sup_{k=1, \ldots,  \lceil T / \D t \rceil} \left| \sum_{\ell=0}^{k-1}  \frac{1}{(1+ \kappa \Delta t)^{k-\ell}} \rho_{\ell+1} \right|^2 
\end{align*}
with
$$ \rho_{k+1} = -\kappa \int_{k \D t}^{(k+1) \D t} (y(s)-y((k+1) \D t))\,ds + \sigma \int_{k \D t}^{(k+1) \D t} (\sqrt{y(s)} - \sqrt{y(k \D t)}  )\,dw(s).$$
So, \eqref{dimp_est_L1}  gives
\begin{align*} 
\mathbb{E} \sup_{k=0, \ldots,  \lceil T / \D t \rceil} |y(k \D t) -Z_k|^2   & \leq c \D t +   c  \, \mathbb{E} \sup_{k=1, \ldots,  \lceil T / \D t \rceil} \left| \sum_{\ell=0}^{k-1}  \frac{1}{(1+ \kappa \Delta t)^{k-\ell}} \rho_{\ell+1} \right|^2 .
\end{align*}
 For the second term straightforward computations yield that
\begin{align*}
 \mathbb{E} \sup_{k=1, \ldots,  \lceil T / \D t \rceil} \left| \sum_{\ell=0}^{k-1} \frac{1}{(1+ \kappa \Delta t)^{k-\ell}} \rho_{\ell+1} \right|^2 \leq c \Delta t, \end{align*}
which completes  the proof of the proposition. 
\end{proof}

\medskip

Using the (suboptimal) second estimate of the above Proposition and   Lemma 3.5  of \cite{dereich2012euler} 
we also obtain a sharp error estimate for the piecewise linear interpolation of the drift-implicit Milstein scheme, i.e.
\begin{align*} 
\overline{Z}_t=\frac{t_{k+1}-t}{\Delta}Z_{k}+ \frac{t -t_k}{\Delta}Z_{k+1}, \qquad t \in [t_k,t_{k+1}], \end{align*}
in a combined $L^2$-$\|\cdot\|_{\infty}$ norm.
 
\smallskip
\begin{prop} 
Let $ \frac{\kappa \theta}{\sigma^2} > 3/2$. Then,  there exists a constant $C >0$   such that
\begin{align*}\left( \E \max_{t \in [0,T]} |y(t) - \overline{Z}_t |^2  \right)^{1/2} \leq C \cdot \sqrt{\log(\Delta t)|} \cdot \sqrt{\Delta t}, \end{align*}
for all  $\Delta \in (0, 1/2]$.
\end{prop}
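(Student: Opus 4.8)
The plan is to lift the grid-point error estimate of Proposition \ref{est_dimp}(ii) to the continuous-time, pathwise-uniform error by separately controlling the oscillation of the exact CIR process between consecutive grid points; the logarithmic factor $\sqrt{|\log(\D t)|}$ will enter precisely as the cost of taking a maximum over the $\lceil T/\D t \rceil$ subintervals. First I would use that, for $t \in [t_k, t_{k+1}]$, the interpolation $\overline{Z}_t$ is the convex combination $\frac{t_{k+1}-t}{\D t} Z_k + \frac{t-t_k}{\D t} Z_{k+1}$. Writing $y(t) - \overline{Z}_t$ as the corresponding convex combination of $y(t)-Z_k$ and $y(t)-Z_{k+1}$ and inserting $y(t_k)$, $y(t_{k+1})$ via the triangle inequality gives the pathwise bound
\begin{align*}
\max_{t \in [0,T]} |y(t) - \overline{Z}_t| \le \max_{k} \, \sup_{s,t \in [t_k,t_{k+1}]} |y(t)-y(s)| + \max_{k=0,\ldots,\lceil T/\D t \rceil} |y(t_k) - Z_k|.
\end{align*}
Taking the $L^2$-norm and applying Minkowski's inequality reduces the claim to estimating these two terms.

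The second term is immediate: by Proposition \ref{est_dimp}(ii) we have $(\E \max_{k} |y(t_k)-Z_k|^2)^{1/2} \le C\sqrt{\D t}$, which is of strictly smaller order than the asserted bound and may be absorbed into it. The first term is the maximal modulus of continuity of the exact process over the grid mesh. Here I would invoke Lemma 3.5 of \cite{dereich2012euler}, which is tailored to the CIR dynamics and, under the moment information available in the regime $\frac{\kappa\theta}{\sigma^2} > 3/2$ (so that the relevant positive and inverse moments of $y$ are finite, cf. \eqref{cir_trans_inv}), yields
\begin{align*}
\left( \E \Big( \max_{k} \, \sup_{s,t \in [t_k,t_{k+1}]} |y(t)-y(s)| \Big)^2 \right)^{1/2} \le C \sqrt{\D t \, |\log(\D t)|}.
\end{align*}
Heuristically the logarithm appears because the maximum ranges over $\lceil T/\D t \rceil$ intervals while each increment over a window of length $\D t$ has Gaussian-type fluctuations of size $\sqrt{\D t}$, so a union-bound / maximal-inequality argument converts the factor $T/\D t$ into $\sqrt{|\log(\D t)|}$.

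Combining the two estimates by the triangle inequality and using $\sqrt{\D t} \le \sqrt{\D t\,|\log(\D t)|}$ for $\D t \le 1/2$ then gives the stated bound. I expect the main obstacle to be the oscillation term, i.e. verifying that the hypotheses of Lemma 3.5 of \cite{dereich2012euler} are met in the critical parameter range: one must ensure that the square-root diffusion of CIR does not destroy the required modulus-of-continuity control near the boundary $0$, which is exactly where the condition $\frac{\kappa\theta}{\sigma^2} > 3/2$ and the finiteness of the inverse moments are used.
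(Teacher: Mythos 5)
Your proposal is correct and follows essentially the same route as the paper, whose (one-sentence) proof likewise combines the grid-point estimate of Proposition \ref{est_dimp}(ii) with Lemma 3.5 of \cite{dereich2012euler} to control the intra-interval fluctuations, the latter being the source of the $\sqrt{|\log(\Delta t)|}$ factor. Your explicit convex-combination decomposition and triangle-inequality bookkeeping simply fill in the details the paper leaves implicit; note only that the condition $\kappa\theta/\sigma^2 > 3/2$ is really needed for Proposition \ref{est_dimp}(ii), while the oscillation lemma for the CIR process holds under the weaker Feller-type condition.
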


\medskip

 The above relation between a drift-implicit Milstein scheme applied to the original SDE
$$ dy(t)=a(y(t))dt + b(y(t)) d w(t) $$
and the BEM applied to the transformed SDE
\[
dx(t) = f(x(t)) dt +  \lambda dw(t)
\]
with
$$ f(x)= \lambda \left( \frac{a(F^{-1}(x))}{b(F^{-1}(x))} -\frac{1}{2}b'(F^{-1}(x)) \right) $$
and
\begin{equation*} 
F(x) = \lambda \int^{x} \frac{1}{b(y)}dy
\end{equation*}
is in fact a particular
case of a more general relation. Expanding  LBE  yields
\begin{align*}
 F^{-1}(X_{k+1}) =F^{-1}(X_{k}) & + \frac{1}{\lambda} b(F^{-1}(X_k))(X_{k+1}-X_k)  \\& + \frac{1}{2 \lambda^2}b'(F^{-1}(X_k)) b (F^{-1}(X_k)) (X_{k+1}-X_{k})^2 + \ldots,
\end{align*}
since
$$ \frac{d}{dx}F^{-1}(x)=  \frac{1}{\lambda} b(F^{-1}(x)), \qquad  \frac{d^2}{(dx)^2}F^{-1}(x)=  \frac{1}{\lambda^2} b'(F^{-1}(x)) b (F^{-1}(x)).
$$ Setting $Y_k=F^{-1}(X_{k})$ and using \eqref{BEM-lamp} we have
\begin{align*}
Y_{k+1} =Y_{k} &  +  \left( a(Y_{k+1}) - \frac{1}{2}b'b(Y_{k+1}) \right) \D t + b(Y_k) \D w_{k+1}   + \frac{1}{2}b'b(Y_k) |\D w_{k+1}|^2 \\ & + R_k
\end{align*}
with 
\begin{align*}
 R_k =  (b(Y_k) - b(Y_{k+1})) f(X_{k+1})\Delta t &+ \frac{1}{\lambda}
 (b'b)(Y_k) \Delta w_{k+1} f(X_{k+1}) \Delta t \\ & +
 \frac{1}{2\lambda^2}(b'b)(Y_k) f^2(X_{k+1}) \Delta t^2 + ...
\end{align*}

So dropping $R_{k}$ and the other higher order terms, we end up with
\begin{align*}
Z_{k+1} =Z_{k} &  +  \left( a(Z_{k+1}) - \frac{1}{2}b'b(Z_{k+1}) \right) \D t + b(Z_k) \D w_{k+1}   + \frac{1}{2}b'b(Z_k) |\D w_{k+1}|^2  .
\end{align*}
In the case of $(\alpha, \beta)=(0, \infty)$ conditions for the well-definedness, stability and strong convergence of this scheme are given
in   \cite{higham2012convergence}. However, the convergence rate analysis for the CIR process, where we can exploit (among other things) that $F^{-1}(x)=x^2$ and also the domination property
\eqref{eq:pos}, 
seems not to carry over to the general case.

\bigskip

\section{Conclusion and Discussion}

In this paper we presented a Lamperti-Euler scheme  for scalar SDEs which take values  in
a domain $D=(l,r)$ and have non-Lipschitz drift or  diffusion coefficients. Our strategy is to first 
use the Lamperti transformation $x(t) = F(y(t))$  (provided that the diffusion coefficient of the original SDE 
is strictly positive on $D$) 
and then to approximate the transformed process $x(t)$, $0\le t\le T$, with the backward Euler scheme.
Transforming back with the inverse Lamperti transformation  gives an approximation scheme for the original SDE.
We also  pointed out a relation of this scheme to a drift-implicit Milstein scheme.

\smallskip
The advantages of this Lamperti-Euler method are   
\begin{itemize}
 \item that it preserves the domain of the original SDE  
\item  and an available framework which allows
to establish strong convergence order one for this scheme.
\end{itemize}

In particular, we use this framework to obtain  such strong convergence results for several
SDEs with non-Lipschitz coefficients from both mathematical finance and bio-mathematics.

\smallskip

Whether the implicitness of this scheme (which e.g. for the CEV process requires solving a non-linear equation) can be avoided
by using a tamed Euler scheme (as in \cite{arnulf_tamed} for the case $D=\mathbb{R}$) 
remains an open question. Open questions are also, whether  Assumption \ref{as:A} and \ref{as:MA} can be formulated
in terms of conditions on the original coefficients of the SDE, 
and whether the convergence rate for the Lamperti-Euler also carries over in general (and not only for the CIR process) to a drift-implicit Milstein scheme.
In particular, the last point leads to the general
 question 
that given a certain numerical approximation (for an SDE with non-Lipschitz coefficients) which   
 perturbations do not change its convergence properties and also its qualitative properties? We will pursue all these topics in our future research.

\bigskip
\bigskip
\bigskip

{\bf Acknowledgements.} The authors would like to thank Martin Altmayer for
valuable comments on an earlier version of the manuscript.

\bigskip

\bibliographystyle{plain}
\bibliography{upthesis}

\end{document}